\def\p{\partial}
\def\Lie{{\cal L}}
\newtheorem{lem}{Lemma} 
\newtheorem{prop}{Proposition} 
\newtheorem{cor}{Corollary} 
\begin{document}


\title{Hyperbolic formulations of General Relativity with 
Hamiltonian structure}

\author{David Hilditch}
\affiliation{Theoretical Physics Institute, University of 
Jena, 07743 Jena, Germany}

\author{Ronny Richter}
\affiliation{Mathematisches Institut, Universi\"at T\"ubingen, 
72076 T\"ubingen, Germany}

\begin{abstract}
With the aim of deriving symmetric hyperbolic free-evolution 
systems for GR that possess Hamiltonian structure and allow for 
the popular puncture gauge condition we analyze the hyperbolicity 
of Hamiltonian systems. We develop helpful tools which are 
applicable to either the first order in time, second order in 
space or the fully second order form of the equations of motion.
For toy models we find that the Hamiltonian structure can
simplify the proof of symmetric hyperbolicity. In GR we use a
special structure of the principal part to prove symmetric
hyperbolicity of a formulation that includes gauge conditions
which are very similar to the puncture gauge.
\end{abstract}

\maketitle

\tableofcontents


\section{Introduction}
\label{section:Introduction}


In the build up to the solution of the binary black hole 
problem in general relativity, a lot of effort was spent 
examining the Einstein equations as a system of
partial differential equations (PDEs). 
Continuous dependence of a solution on given data, or 
mathematical well-posedness, of the initial boundary value 
problem (IBVP) is an essential property both for meaningful physical
models and for numerical applications. When GR is
decomposed into and against 
spacelike hypersurfaces it becomes a system of ten coupled 
PDEs; six evolution equations and four constraint equations. 
The constraints must be satisfied at all times in the 
development of a spacetime. There are few statements about 
well-posedness for constrained
systems~\cite{andersson03c,PhysRevD.77.084007}, but since the 
constraints are compatible with the evolution equations we 
may consider free-evolution schemes, in which for the purposes 
of analysis the constraints are not assumed; suitable initial 
and boundary data must be given to guarantee their 
satisfaction. Well-posedness of the free IBVP is determined 
by the character of the principal part of the system and
the boundary conditions. In particular well-posedness of the 
IVP is guaranteed by strong hyperbolicity. For the IBVP the 
notion of symmetric hyperbolicity may be used to guarantee 
well-posedness~\cite{GKO:1995,Nagy:2004td}.

From the free-evolution PDEs point of view, the gauge freedom 
of GR has two aspects, the freedom to choose arbitrary equations of 
motion for the spacetime coordinates and the addition of the 
constraint equations to the evolution equations. Both aspects
alter the principal part and therefore potentially the well-posedness
of the IBVP. That the well-posedness of the IBVP depends upon 
the choice of coordinates is troublesome, since there may be gauge 
conditions of interest that do not give rise to a well-posed IBVP. 

A coordinate choice that has been applied successfully in numerical
GR is determined by the puncture gauge~\cite{Gundlach:2006tw,vanMeter:2006vi}.
Here we want to derive Hamiltonian formulations that allow for that 
gauge condition and possess a well-posed IBVP as a free-evolution 
system. Our methods are also applicable when other gauge conditions 
are desired.

The motivation is twofold. Firstly every Hamiltonian system has a 
conserved quantity. The existence of a conserved quantity is also 
part of the defining property of symmetric hyperbolic PDEs. Therefore
one may expect a connection between the concepts. The relationship 
between symmetric hyperbolicity and Hamiltonian structure has not 
been previously studied by numerical relativists. After analyzing 
several symmetric hyperbolic Hamiltonian systems we find that the two 
energies are in general not related. However in some special cases 
the Hamiltonian energy can be used to simplify the analysis of 
symmetric hyperbolicity.

The second reason to consider Hamiltonian systems are their
convenient properties. As discussed above Hamiltonian systems possess
a conserved energy. Moreover it is guaranteed that the time translation
map is symplectic. The literature contains numerical methods
that preserve one of those properties exactly~\cite{Hairer-2009,HaLW06},
which makes Hamiltonian formulations of GR potentially interesting for
numerical simulation. Previous studies of symplectic integrators in
numerical GR show that indeed their application can have advantages
over standard Runge-Kutta 
schemes~\cite{PhysRevD.73.024001,Richter:2008pr,Richter:2009ff}.

In section~\ref{section:Hyperbolicity} we describe the basic 
concepts of hyperbolicity for fully second order systems. We then 
introduce Hamiltonian systems in section~\ref{sec:Hamilton_systems}.
In section~\ref{sec:Tools} we present several new tools that are 
useful in the analysis of hyperbolicity. Next those tools are 
applied to two toy models in section~\ref{section:Maxwell_Pure_Gauge}.
We then turn our attention to the Einstein equations. 
Section~\ref{section:ADM_Gauge} contains our Hyperbolicity 
analysis for a large class of Hamiltonian formulations of GR 
with live gauges. Finally we conclude in 
section~\ref{section:Conclusion}.


\section{Hyperbolicity of second order systems}
\label{section:Hyperbolicity}


We start by introducing strong and symmetric hyperbolicity for 
first order in time, second order in space as well as fully 
second order systems. The former case was treated comprehensively 
in~\cite{Gundlach:2005ta}. For fully second order systems one 
can find material on strong hyperbolicity 
in~\cite{Lax,kreiss-2002-604}. Here we also discuss symmetric 
hyperbolicity of fully second order systems.


\subsection{Definitions}
\label{section:Definitions}


\paragraph*{Well-posedness and hyperbolicity:} Well-posedness
is the requirement that an initial (boundary) value problem
has a solution that is unique and depends continuously on the 
initial (and boundary) data. It can be 
shown~\cite{GKO:1995,Nagy:2004td} that the first order evolution 
system
\begin{equation}
\p_tu^\mu={A^{p\mu}}_{\nu}\p_pu^{\nu}+S^{\mu}\label{eqn:evolution}
\end{equation}
has a well-posed initial (boundary) value problem if it is
\emph{strongly (symmetric) hyperbolic}.

The system \eqref{eqn:evolution} is called
strongly hyperbolic if the \emph{principal symbol}, the
contraction with a spatial vector $s_i$ of the principal part 
${A^{p\mu}}_{\nu}s_p$, has real eigenvalues and a complete set 
of eigenvectors which depend continuously on $s_i$. The system 
is called symmetric hyperbolic if there exists a hermitian 
{\it symmetrizer} $H$ such that (suppressing some matrix 
indices)
\begin{equation}
\label{eq:candidate_eq}
HA^p=(HA^p)^\dagger.
\end{equation}
and where $H$ is positive definite. A hermitian $\bar H$ that 
satisfies~\eqref{eq:candidate_eq} but is not necessarily positive definite 
is called a \emph{candidate symmetrizer}. Symmetric hyperbolicity is a 
more strict condition than strong hyperbolicity 
\cite[appendix C]{Gundlach:2005ta}. It is equivalent to 
the existence of a conserved positive definite energy 
\begin{align}
E&=\int\epsilon\textrm{d}x,&
\epsilon&=u^\dagger H u
\end{align}
for the principal part of the system linearized around some 
background.

These definitions were used to define strong and symmetric 
hyperbolicity for first order in time, second order in space 
systems \cite{Gundlach:2005ta} of the form
\begin{subequations}
\label{eq:FOITSOIS_def}
\begin{align}
\p_t v &= A_1^i \partial_i v + A_1 v +A_2 w + a,\\
\p_t w &= B_1^{ij}\partial_i\partial_jv + B_1^i \partial_iv+B_1 v 
+B_2^i\partial_iw+B_2 w + b.
\end{align}
\end{subequations}
In what follows we also consider fully second order systems of the form
\begin{equation}
\p_t^2q={\mathcal A}^{ij}\p_i\p_jq+{\cal B}^i\p_i\p_tq+\mathcal S.
\label{eq:evolution_second}
\end{equation}
In fact the set of fully second order systems is a special case of 
the first order in time, second order in space systems. However in both
forms of the equations certain calculations are simplified.

The system \eqref{eq:FOITSOIS_def} is called strongly (symmetric) 
hyperbolic if there exists a reduction to first order that is strongly 
(symmetric) hyperbolic. We likewise call the 
system~\eqref{eq:evolution_second} strongly (symmetric) hyperbolic
if there exists a fully first order reduction that is strongly 
(symmetric) hyperbolic. 

With those definitions a system is strongly hyperbolic if and only if 
the \emph{principal symbol}, $P^s$, has a complete set of eigenvectors 
(with real eigenvalues) that depend continuously on a spatial 
vector $s_i$. Strong hyperbolicity is equivalent to the existence of a 
complete set of characteristic variables with real 
speeds~(\cite{Gundlach:2005ta,Lax, kreiss-2002-604} and 
appendix~\ref{App:Hyperbolicity}). For first order in time, second order 
in space systems the principal symbol is
\begin{align}
P_1^s=\left(
\begin{array}{cc}
A_1^i s_i & A_2\\
B_1^{ij}s_i s_j & B_2^i s_i
\end{array}
\right),
\end{align}
and for fully second order systems it becomes
\begin{align}
\label{eq:fully_2nd_P_symbol}
P_2^s=\left(
\begin{array}{cc}
0 & \mathbf{1}\\
\mathcal A^{ij}s_i s_j & \mathcal B^i s_i
\end{array}
\right).
\end{align}

The existence of a symmetric hyperbolic first order reduction
is equivalent to the existence of a conserved positive definite 
energy, $E=\int\epsilon dx$ (again details can be found in 
\cite{Gundlach:2005ta, Lax, kreiss-2002-604} and appendix 
\ref{App:Hyperbolicity} respectively). For first order in time, second 
order in space systems the energy density has the form
\begin{align}
\nonumber
\epsilon_1&=u_i^\dag
H_1^{ij}(v)
u_j\\
&=
\left(
\begin{array}{c}
\p_i v\\
w
\end{array}
\right)^\dag
\left(
\begin{array}{cc}
H_{11}^{ij} & H_{12}^i\\
H_{12}^{j\,\dag} & H_{22}
\end{array}
\right)
\left(
\begin{array}{c}
\p_j v\\
w
\end{array}
\right),
\end{align}
and for fully second order systems it is
\begin{align}
\nonumber
\epsilon_2&=
u_i^\dagger
H_2^{ij}(q)
u_j\\
&=
\left(
\begin{array}{c}
\p_iq\\
\p_tq
\end{array}
\right)^\dagger
\left(
\begin{array}{cc}
H_{11}^{ij} & H_{12}^i\\
H_{12}^{j\,\dag} & H_{22}
\end{array}
\right)
\left(
\begin{array}{c}
\p_jq\\
\p_tq
\end{array}
\right).
\end{align}
In both cases we denote $H^{ij}$ a \emph{symmetrizer}. Furthermore if there
is a matrix $\bar H^{ij}$ such that $u_i^\dag\bar H^{ij} u_j$ is a conserved
quantity, but not necessarily positive definite then we call $\bar H^{ij}$
a \emph{candidate symmetrizer}.

Since there is only one time coordinate, if a first order in time, second 
order in space system~\eqref{eq:FOITSOIS_def} is obtained as the reduction 
of a fully second order system~\eqref{eq:evolution_second}, then the two 
systems will have the same level of hyperbolicity. 

In App.~\ref{App:Hyperbolicity} we demonstrate the statements about
equivalence of the various flavors of hyperbolicity for fully second order
systems.

We denote the matrices
\begin{align}
\label{eq:def_PP_matrix}
A^p{}_i{}^j &=
\left(
\begin{array}{cc}
A_1^j\delta^p_i & A_2\delta^p{}_i\\
B_1^{pj} & B_2^{p}
\end{array}
\right),&
\mathfrak A^p{}_i{}^j &=
\left(
\begin{array}{cc}
0 & \delta^p{}_i\\
\mathcal A^{pj} & \mathcal B^{p}
\end{array}
\right)
\end{align}
the principal part matrices of the first order in time
second order in space and the fully second order system
respectively.


\subsection{Symmetric hyperbolicity of fully second order 
systems}\label{sec:Sym_Hyp_FSO_systems}


Since the structure of the matrices $\mathfrak A^p{}_i{}^j$ 
in equation~\eqref{eq:def_PP_matrix} is quite simple one can 
derive another general criterion for fully second order systems 
of the form \eqref{eq:evolution_second} to be symmetric 
hyperbolic:
\newcounter{repeatedLemma}
\setcounter{repeatedLemma}{\thelem}
\begin{lem}
Symmetric hyperbolicity of the fully second order 
system~\eqref{eq:evolution_second} is equivalent to the existence 
of {\it a second order symmetrizer and fluxes $(H_1,\phi^i,\phi^{ij}$)} 
satisfying
\begin{align}
\label{eq:fluxes_props_text}
\phi^{\dag i}&=\phi^{i},&\phi^{ij}&=\phi^{[ij]}=\phi^{ji\,\dag},
\end{align}
\begin{align}
\nonumber
s_is_js_k(\phi^i - \mathcal{B}^{\dag i}H_1)&\mathcal{A}^{jk}=\\
&=\mathcal{A}^{\dag jk}(\phi^i-H_1\mathcal{B}^{i})s_is_js_k,
\label{eq:2nd_Cons_first_text}
\end{align}
for every spatial vector $s_i$, and
\begin{align}
\label{eq:2nd_Positivity_text}
&H^{ij} =\\
\nonumber
&\;\;\left(\begin{array}{cc}
H_1\mathcal{A}^{ij}+\mathcal{B}^{(i\,\dag}H_1\mathcal{B}^{j)}-
\mathcal{B}^{(i\,\dag}\phi^{j)} + \phi^{ij\,\dag} & \phi^i 
- \mathcal{B}^{\dag i}H_1  \\  
\phi^j - H_1\mathcal{B}^{j} & H_1
\end{array}\right)
\end{align}
hermitian positive definite.
\label{lem:fully_2nd_sym_hyp}
\end{lem}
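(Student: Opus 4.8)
The plan is to take the definition of symmetric hyperbolicity of \eqref{eq:evolution_second} literally, but to exploit the characterization established in App.~\ref{App:Hyperbolicity}: the fully second order system is symmetric hyperbolic if and only if there exists a hermitian positive definite $H^{ij}$ such that $\epsilon_2=u_i^\dagger H^{ij}u_j$, with $u_i=(\p_iq,\p_tq)$, is conserved along the solutions of the principal part of \eqref{eq:evolution_second}. The lemma is then just a convenient parametrization of the admissible symmetrizers, and the proof reduces to the energy balance. Throughout one uses that on solutions $\p_t\p_iq=\p_i\p_tq$ and $\p_t\p_tq=\mathcal{A}^{ij}\p_i\p_jq+\mathcal{B}^i\p_i\p_tq$ modulo lower order terms, and --- crucially --- that $\p_i\p_jq$ is symmetric in $ij$, so that only $\mathcal{A}^{(ij)}$ enters; this symmetry is the reduction constraint of the first order form.

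For the conservation step I would differentiate $\epsilon_2$ in time, substitute the evolution equations, and sort the result by the independent quantities $\p_iq$, $\p_tq$, $\p_k\p_iq$ and $\p_k\p_tq$. Requiring $\p_t\epsilon_2=\p_kF^k$ with a real flux $F^k=u_i^\dagger\Phi^{kij}u_j$ then peels off the conditions one channel at a time. The channel quadratic in $\p_tq$ with a single derivative fixes the $\p_tq$--$\p_tq$ part of $\Phi^{kij}$ to equal $\phi^k$ --- this is how $\phi^i$ enters --- and makes reality of that flux coincide with $\phi^{\dagger i}=\phi^i$. The mixed channels determine the cross part of $\Phi^{kij}$, and the two ways of reading it off, from the $\p_k\p_tq$--$\p_iq$ and from the $\p_k\p_iq$--$\p_tq$ terms, agree precisely when the $\p_iq$--$\p_jq$ block $H_{11}^{ij}$ is hermitian, which --- given \eqref{eq:fluxes_props_text} --- is the same as writing $H^{ij}$ in the form \eqref{eq:2nd_Positivity_text}, and is in any case forced by $\epsilon_2$ being real. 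The only channel that does not close on its own is the one quadratic in $\p_iq$ with a single $\p_k$: there an admissible $\p_iq$--$\p_jq$ flux exists if and only if $s_is_js_k(\phi^i-\mathcal{B}^{\dagger i}H_1)\mathcal{A}^{jk}$ is hermitian for every $s_i$, which is \eqref{eq:2nd_Cons_first_text}; having this only in the fully contracted form is harmless, since the general flux is recovered by polarization in $s_i$.

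Assembling the two directions is then bookkeeping. For the ``if'' part, given $(H_1,\phi^i,\phi^{ij})$ satisfying \eqref{eq:fluxes_props_text} and \eqref{eq:2nd_Cons_first_text}, define $H^{ij}$ by \eqref{eq:2nd_Positivity_text}: the relations \eqref{eq:fluxes_props_text} make $H^{ij}$ hermitian (antisymmetry of $\phi^{ij}$ together with $\phi^{ij}=\phi^{ji\,\dagger}$ fixes the antisymmetric part of $H_{11}^{ij}$ consistently, and $\phi^{\dagger i}=\phi^i$ the cross block), the computation above together with \eqref{eq:2nd_Cons_first_text} shows $\epsilon_2$ is conserved, and, $H^{ij}$ being positive definite by assumption, App.~\ref{App:Hyperbolicity} delivers symmetric hyperbolicity. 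For the ``only if'' part, from a symmetric hyperbolic reduction take the hermitian positive definite $H^{ij}$ with conserved $\epsilon_2$; set $H_1:=H_{22}$, $\phi^i:=H_{12}^i+\mathcal{B}^{\dagger i}H_1$, and let $\phi^{ij\,\dagger}$ be whatever remains of $H_{11}^{ij}$ after subtracting $H_1\mathcal{A}^{ij}+\mathcal{B}^{(i\,\dagger}H_1\mathcal{B}^{j)}-\mathcal{B}^{(i\,\dagger}\phi^{j)}$. Then reality of $\epsilon_2$ gives $H_{22}^\dagger=H_{22}$, the $\p_tq$--$\p_tq$ channel gives $\phi^{\dagger i}=\phi^i$, hermiticity of $H_{11}^{ij}$ together with the conservation law force $\phi^{ij}=\phi^{[ij]}=\phi^{ji\,\dagger}$, the $\p_iq$--$\p_jq$ channel gives \eqref{eq:2nd_Cons_first_text}, and positivity is inherited.

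I expect the main obstacle to be organizing the two channels that contain $\p_k\p_iq$. A naive match over-determines the $\p_iq$--$\p_jq$ flux, and one has to insert the symmetry of $\p_i\p_jq$ (equivalently, that only $\mathcal{A}^{(ij)}$ is relevant) at exactly the right points to see that this apparent over-determination collapses to the single identity \eqref{eq:2nd_Cons_first_text}, and to separate the genuinely free data $\phi^i$, $\phi^{ij}$ from the combinations that conservation pins down. Once those index manipulations are carried through, the remainder --- hermiticity and reality of $H^{ij}$ under \eqref{eq:fluxes_props_text}, and positivity as the extra requirement beyond conservation --- is routine.
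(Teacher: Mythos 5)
Your proposal follows essentially the same route as the paper's proof in App.~\ref{App:Hyperbolicity}: both reduce symmetric hyperbolicity to the existence of a positive conserved quadratic form, compute $\partial_t\epsilon_2$ and match it against a flux divergence, and then peel off conditions channel by channel — the $\partial_tq$--$\partial_tq$ channel yielding $\phi^{\dagger i}=\phi^i$, the cross channels fixing the block form~\eqref{eq:2nd_Positivity_text}, and the $\partial_iq$--$\partial_jq$ channel yielding the integrability condition~\eqref{eq:2nd_Cons_first_text}. The one step you handle informally (that the ``over-determined'' $\partial_iq$--$\partial_jq$ flux equation is solvable precisely under~\eqref{eq:2nd_Cons_first_text}, after symmetrizing over derivative indices) is exactly the point where the paper defers to~\cite[section IVB]{Gundlach:2005ta}, so your intuition there is correct even though you do not spell out the solvability argument.
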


The lemma is proven in appendix \ref{App:Hyperbolicity}.
To simplify the criterion further we notice the following.
If $T^i{}_j$ is an invertible matrix then positivity
of \eqref{eq:2nd_Positivity_text} is equivalent to 
$T^i{}_jH^{jk}T^\dag_k{}^{j}$ being positive definite. We 
choose
\begin{align}
T^i{}_j =
\left(
\begin{array}{cc}
L^\dag \delta^i{}_j & L^\dag \mathcal B^i - L^\dag\phi^iLL^\dag\\
0 & L^\dag
\end{array}
\right),
\end{align}
with $H_1=L^{-\dag}L^{-1}$. This is possible because positivity of 
\eqref{eq:2nd_Positivity_text} implies positivity of $H_1$. The 
invertible matrix $L^{-\dag}$ can be chosen to be the Cholesky
decomposition of $H_1$.

It follows that positivity of \eqref{eq:2nd_Positivity_text} is
equivalent to positivity of
\begin{align}
L^{-1}\mathcal A^{ij}L + L^\dag\phi^i\mathcal B^j L - L^\dag 
\phi^iLL^\dag\phi^j L
+L^\dag\phi^{ij\,\dag}L.
\end{align}
If we redefine $\phi^i$ and $\phi^{ij}$ appropriately then we get
the following
\begin{cor}
The fully second order system~\eqref{eq:evolution_second} is
symmetric hyperbolic if and only if there exists an invertible matrix $L$
and fluxes $(\phi^i,\phi^{ij})$ satisfying
\begin{align}
\label{eq:fluxes_props_cor}
\phi^{\dag i}&=\phi^{i},&\phi^{ij}&=\phi^{[ij]}=\phi^{ji\,\dag}
\end{align}
such that
\begin{align}
\nonumber
s_is_js_k(\phi^i - &L^{\dag}\mathcal{B}^{i\,\dag}L^{-\dag})L^{-1}\mathcal{A}^{jk}L=\\
&=L^{\dag}\mathcal{A}^{jk\,\dag}L^{-\dag}(\phi^i-L^{-1}\mathcal{B}^{i}L)s_is_js_k,
\label{eq:2nd_Cons_first_cor}
\end{align}
for every spatial vector $s_i$, and the matrix
\begin{align}
L^{-1}\mathcal A^{ij}L + \phi^{(i}L^{-1}\mathcal B^{j)} L - \phi^{(i}\phi^{j)}
+\phi^{ij\,\dag}
\label{eq:2nd_Positivity_text_cor}
\end{align}
hermitian positive definite.
\label{cor:fully_2nd_sym_hyp}
\end{cor}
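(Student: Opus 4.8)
The plan is to obtain the corollary as a direct consequence of Lemma~\ref{lem:fully_2nd_sym_hyp} via the congruence transformation indicated in the text above, so that no further hyperbolicity machinery is required. First I would invoke the lemma to trade the statement ``the system~\eqref{eq:evolution_second} is symmetric hyperbolic'' for the existence of data $(H_1,\phi^i,\phi^{ij})$ satisfying the flux relations~\eqref{eq:fluxes_props_text}, the cubic conservation identity~\eqref{eq:2nd_Cons_first_text}, and positive definiteness of the block matrix $H^{ij}$ in~\eqref{eq:2nd_Positivity_text}. Since $H_1$ is the lower-right block of the positive definite $H^{ij}$ it is itself positive definite, hence admits a Cholesky factorization $H_1=L^{-\dag}L^{-1}$ with $L$ invertible; this justifies the change of variables.

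Next I would use the elementary fact that, for any invertible $T^i{}_j$, a hermitian $H^{ij}$ is positive definite if and only if $T^i{}_j H^{jk}T^{\dag}{}_k{}^{l}$ is, apply it with the block upper-triangular $T^i{}_j$ displayed above (invertible because $L$ is), and carry out the block multiplication. The off-diagonal blocks of $H^{ij}$, namely $\phi^i-\mathcal B^{\dag i}H_1$, are arranged precisely so that the transformation strips off the $H_1$-weighting and returns, after the redefinitions $\phi^i\mapsto L^{\dag}\phi^i L$ and $\phi^{ij}\mapsto L^{\dag}\phi^{ij}L$, the matrix~\eqref{eq:2nd_Positivity_text_cor}; here ``positive definite'' is understood, as throughout this section, after hermitization in the combined $i\leftrightarrow j$, $\dag$ sense dictated by the energy density $u_i^\dag H^{ij}u_j$. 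I would then check that the redefinition preserves the flux conditions: $L^{\dag}\phi^i L$ is again hermitian when $\phi^i$ is, and $L^{\dag}\phi^{ij}L$ inherits antisymmetry in $ij$ together with the relation $\phi^{ij}=\phi^{ji\,\dag}$ because $L$ carries no spatial index; this yields~\eqref{eq:fluxes_props_cor}.

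It then remains to rewrite the conservation identity~\eqref{eq:2nd_Cons_first_text} under the same substitution. Inserting $H_1=L^{-\dag}L^{-1}$ and $\phi^i=L^{-\dag}(L^{\dag}\phi^i L)L^{-1}$ and conjugating the whole identity by $L^{\dag}(\cdot)L$ — again a bijection on hermitian matrices because $L$ is invertible — turns the left-hand side into $(\phi^i-L^{\dag}\mathcal B^{i\,\dag}L^{-\dag})L^{-1}\mathcal A^{jk}L$ and the right-hand side into $L^{\dag}\mathcal A^{jk\,\dag}L^{-\dag}(\phi^i-L^{-1}\mathcal B^i L)$, i.e.\ exactly~\eqref{eq:2nd_Cons_first_cor}. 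Running all of these steps in reverse — every redefinition is invertible — gives the converse direction: from an invertible $L$ and fluxes satisfying~\eqref{eq:fluxes_props_cor}, \eqref{eq:2nd_Cons_first_cor} and positivity of~\eqref{eq:2nd_Positivity_text_cor} one reconstructs $(H_1,\phi^i,\phi^{ij})$ meeting the hypotheses of Lemma~\ref{lem:fully_2nd_sym_hyp}, hence symmetric hyperbolicity. I expect the only real work to be the bookkeeping in the block multiplication $T H T^{\dag}$ and keeping the symmetrization/hermitization conventions consistent across the conservation identity; once those are pinned down, every implication is a one-line congruence argument.
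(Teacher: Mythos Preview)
Your proposal is correct and follows essentially the same route as the paper: invoke Lemma~\ref{lem:fully_2nd_sym_hyp}, use the Cholesky factorization $H_1=L^{-\dag}L^{-1}$ to build the block upper-triangular congruence $T$, and then absorb the resulting $L$-factors into redefined fluxes $\phi^i\mapsto L^{\dag}\phi^iL$, $\phi^{ij}\mapsto L^{\dag}\phi^{ij}L$. The paper merely says ``redefine $\phi^i$ and $\phi^{ij}$ appropriately'' without spelling out the checks on~\eqref{eq:fluxes_props_cor} and~\eqref{eq:2nd_Cons_first_cor} or the converse direction, so your write-up is in fact more complete than what appears there.
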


The conditions of corollary \ref{cor:fully_2nd_sym_hyp} simplify significantly
in the special case that $\mathcal{B}^i=0$. In that case symmetric
hyperbolicity is equivalent to the existence of an invertible
matrix $L$ and fluxes $\phi^{ij}=\phi^{[ij]}$ such that
\begin{align}
\label{eq:2nd_B_vanish_cor}
H_3{}^{ij}&=L^{-1} \mathcal{A}^{ij} L + \phi^{ij\,\dag}
\end{align}
is hermitian positive definite. We can choose $\phi^i=0$
in corollary \ref{cor:fully_2nd_sym_hyp} without affecting positivity,
because $\phi^{(i}\phi^{j)}$ is clearly positive semi-definite.


\section{Hamiltonian systems}\label{sec:Hamilton_systems}


In this article we aim to analyze the hyperbolicity of 
Hamiltonian systems. Here we give a short introduction to the basic notions.

\subsection{Variational principle}
In many cases the dynamics of a physical system can be described 
by a single functional, the action
\begin{align}
S = \int L(q,\dot q)dt,
\end{align}
with the \emph{Lagrangian} $L$, where $\dot q$ denotes the time
derivative of $q$.

According to Hamilton's principle classical systems behave such that
the action is minimal. This leads to the \emph{Euler-Lagrange} equations
\begin{align}
\label{eq:EL-eqn}
\frac{d}{dt}\frac{\partial L}{\partial\dot q}-\frac{\partial L}{\partial q} 
= 0.
\end{align}
The ADM equations \cite{Arnowitt:1962hi}
are of the form \eqref{eq:EL-eqn}, if the extrinsic curvature, $K$,
is interpreted as an abbreviation for the time derivative of the
3-metric, $\gamma$, using
\eqref{eq:definition_K}. They are composed of the
first order in time equations \eqref{eq:ADM_constraints}
and the second order equations \eqref{eq:dynamical_ADM_b}.
The former being the constraints and the latter the dynamical ADM equations.

As mentioned in the previous section we are only interested in
free evolution systems of the form \eqref{eq:FOITSOIS_def} or
\eqref{eq:evolution_second}.

For general relativity that means we consider systems whose solutions 
can be directly related to solutions of the ADM equations if the
constraints are satisfied, but that also possess constraint
violating solutions. 

The term in the Euler-Lagrange equations \eqref{eq:EL-eqn}
that contains the highest order time derivatives is
\begin{align}
\frac{d}{dt}\frac{\partial L}{\partial \dot q}
= \frac{\partial^2 L}{\partial \dot q\partial q}\,\dot q
+ \frac{\partial^2 L}{\partial \dot q\partial \dot q}\,\ddot q\,.
\end{align}
Hence, for a free evolution system 
the matrix
\begin{align}
\label{eq:Def_Minv}
M:=\frac{\partial^2 L}{\partial\dot q\partial\dot q}
\end{align}
is invertible (the equation of motion of every variable may 
be solved its second time derivative). For this type of system one 
can simplify the structure of the Euler-Lagrange equations by writing 
them in their \emph{Hamiltonian formulation}.

\subsection{Hamiltonian structure}
A Hamiltonian formulation is given by the specification of a Hamiltonian 
${\cal H}$ for the system, constructed from so-called 
\emph{canonical positions} and \emph{momenta} $(q;p)$. It can be shown 
that the Hamiltonian is a conserved quantity of the system, the total 
energy.

Here the Hamiltonian can be expressed as the integral 
over space of a Hamiltonian density 
\begin{align}
{\mathcal H}&=\int_{\Sigma_t} d^3x\,{\mathcal H}_D(q,\p_i q, p,\p_i p),
\end{align}
where the Hamiltonian density is a local function of the canonical
variables and their spatial derivatives.
One obtains the following \emph{canonical equations of motion}
\begin{subequations}
\label{eq:canon_eqn}
\begin{align}
\p_tq&= \frac{\p{\mathcal H}_D}{\p p}-\p_i
\frac{\p{\mathcal H}_D}{\p(\p_i p)},\\
\p_tp&=-\frac{\p{\mathcal H}_D}{\p q}+\p_i
\frac{\p{\mathcal H}_D}{\p(\p_i q)}.
\end{align}
\end{subequations}
A system that can be written in the form~\eqref{eq:canon_eqn} 
is said to have~\emph{Hamiltonian structure}.

If one treats the Hamiltonian formulation of a model as fundamental
then also free evolution equations with singular matrix $M^{-1}$ exist.
In some situations it may even be possible to reduce those systems
to a fully second order form. We consider only formulations in which 
one can use the standard Legendre transformation to convert 
between \eqref{eq:EL-eqn} and \eqref{eq:canon_eqn}.


\subsection{Canonical equations of motion}
\label{section:Observations}


For systems with Hamiltonian structure one may easily show that
a candidate symmetrizer exists. We consider a Hamiltonian 
formulation of some quasilinear field theory in which the 
functional form of the Hamiltonian density is given by 
\begin{align}
{\cal H}_D&=\frac12 u^\dag \bar H(q)u+S(q,\p_iq,p)\label{eqn:Hamiltonian}
\end{align}
with $u^\dag=(\p_i q^\dag, p^\dag)$, where
\begin{align}
\label{eq:Ham_Matrix}
\bar H^{ij}&=\left(\begin{array}{cc}
V^{ij} & F^{i\,\dag}      \\ F^j  & M^{-1}
\end{array}\right)
\end{align}
and $S(q,\p_iq,p)$ does not contribute to the principal part of 
the system. Without loss of generality we can assume that
\[
M^{-1}=M^{-\dag},\quad V^{ij} = V^{ji\,\dag}, \quad V^{ij}=V^{ji}.
\]
The ansatz \eqref{eqn:Hamiltonian} might seem very restrictive,
because e.g. the Hamiltonian density for the dynamical ADM equations
contains curvature terms, i.e. second order spatial
derivatives of the 3-metric. However using integration by parts,
those terms can be transformed to match with the ansatz
\eqref{eqn:Hamiltonian}. Integration by parts introduces boundary 
terms, but the equations of motion are unaffected.

The canonical equations of motion are
\begin{subequations}
\label{eq:Hamiltonian_PP}
\begin{align}
\p_tq &=M^{-1}p+F^i\p_iq+s_q,\\
\p_tp &=F^{i\,\dag}\p_ip + V^{ij}\p_j\p_iq+s_p,
\end{align}
\end{subequations}
where $s_p$ and $s_q$ denote terms that do not contribute to the
principal part of the system.

If we identify $q$ and $p$ with $v$ and $w$
respectively then we see that \eqref{eq:Hamiltonian_PP} is a
first order in time second order in space system of the form
\eqref{eq:FOITSOIS_def}.
A short calculation shows that
the matrix $\bar H$ from \eqref{eq:Ham_Matrix} is a
candidate symmetrizer for \eqref{eq:Hamiltonian_PP}, we denote it the
\emph{canonical candidate}. Thus, if $\bar H$ is positive definite,
like e.g. for the wave equation,
then the system \eqref{eq:Hamiltonian_PP} is automatically
symmetric hyperbolic.


\section{Tools to analyze hyperbolicity}\label{sec:Tools}


In this section we present several tools and important special
cases that are helpful in the analysis of hyperbolicity.


\subsection{Decoupled variables}\label{sec:Decoupled_variables}


The special case that simplifies our analysis the most is that 
of the principal part matrix having a special block structure.

An obvious special case is that of sets $(v_1,w_1)$ and $(v_2,w_2)$ 
of fields that decouple in the principal part of the system. It 
is easy to check that the system is strongly (symmetric) hyperbolic
if and only if the two subsystems are. Hence one may analyze two 
disjoint systems in isolation. The most important example in 
the context of general relativity is that of matter and spacetime 
variables.

Another important special case is that where there are pairs
$(v_1,w_1)$ and $(v_2,w_2)$ that in the principal part are coupled
through first order spatial derivative terms only and where the
couplings within each pair occur through zeroth and second order
spatial derivatives. The principal part matrix has the structure
\begin{align}
\label{eq:block_structure_PP}
A^p{}_i{}^{j} =\left(
\begin{array}{cccc}
0 & A^p_{12}{}_i{}^j & A^p_{13}{}_i & 0\\
A^p_{21}{}_i{}^j  & 0 & 0 & A^p_{24}{}_i{}\\
A^p_{31}{}^j & 0 & 0 & A^p_{34}\\
0 & A^p_{42}{}^j & A^p_{43} & 0
\end{array}
\right).
\end{align}
For many systems that are relevant 
for numerical relativity one can achieve this form, 
because one can distinguish between variables with an odd or even 
number of indices. Since only the metric can contract these indices 
the two groups of variables are usually not mixed.

It can be checked that if there is a symmetrizer $\tilde H$,
\begin{align}
\tilde H^{ij} =\left(
\begin{array}{cccc}
H_{11}^{ij} & H_{12}^{ij} & H_{13}^i & H_{14}^i\\
H_{12}^{ji\,\dag} & H_{22}^{ij} & H_{23}^i & H_{24}^i\\
H_{13}^{j\,\dag} & H_{23}^{j\,\dag} & H_{33} & H_{34}\\
H_{14}^{j\,\dag} & H_{24}^{j\,\dag} & H_{34} & H_{44}
\end{array}
\right)
\end{align}
of \eqref{eq:block_structure_PP} then
\begin{align}
\label{eq:partitioned_symmetrizer}
H^{ij} =\left(
\begin{array}{cccc}
H_{11}^{ij} & 0 & 0 & H_{14}^i\\
0 & H_{22}^{ij} & H_{23}^i & 0\\
0 & H_{23}^{j\,\dag} & H_{33} & 0\\
H_{14}^{j\,\dag} & 0 & 0 & H_{44}
\end{array}
\right)
\end{align}
is also a symmetrizer of that system.
The positivity of $H$ follows, because the
positivity of $\tilde H$ implies positivity of all its principal
minors (exchange rows and columns appropriately).
Furthermore it is straightforward to check that
$H$ defines a conserved quantity, because the principal part matrix
has the block structure \eqref{eq:block_structure_PP}.

Hence a system with a principal part matrix of the form
\eqref{eq:block_structure_PP} is symmetric hyperbolic if and only if
it admits a symmetrizer of the form \eqref{eq:partitioned_symmetrizer}.

For fully second order formulations the principal part matrix is
$\mathfrak A^p{}_i{}^j$ defined in \eqref{eq:def_PP_matrix}.
In that matrix the upper left block always vanishes and the upper
right block is the identity. Hence, those submatrices
automatically have the desired block structure \eqref{eq:block_structure_PP}
and the argument reduces to conditions on the block structure of
$\mathcal A^{ij}$ and $\mathcal B^i$. If they have the following
form
\begin{align}
\mathcal A^{ij} &=
\left(
\begin{array}{cc}
\mathcal A_{11}^{ij} & 0 \\
0 & \mathcal A_{22}^{ij}
\end{array}
\right),&
\mathcal B^{i} &=
\left(
\begin{array}{cc}
0 & \mathcal B_{12}^{i} \\
\mathcal B_{21}^{i} & 0
\end{array}
\right)
\end{align}
then one can assume that the symmetrizer has the block structure
\eqref{eq:partitioned_symmetrizer}.

These considerations about the block structure of symmetrizers
can be very helpful when one tries to derive a symmetrizer by
making an ansatz with certain parameters. A block structure like
\eqref{eq:partitioned_symmetrizer} halves the number of parameters 
in the ansatz, simplifying the calculations significantly.


\subsection{Searching for symmetrizers in a family of 
matrices -- the rank criterion}\label{sec:rank_criterion}


In this section we consider whether or not symmetrizers may 
be found in a set of matrices $\mathcal G$. We find a necessary 
condition for $\mathcal G$ to contain a symmetrizer. If 
$\mathcal G$ can be chosen big enough then one obtains a 
necessary condition for symmetric hyperbolicity. We refer to the 
condition as the rank criterion.

We find in applications that it is typically easier to check 
the rank criterion than the alternative, which is to take the set 
of candidate symmetrizers for a certain formulation and prove 
that the set contains no positive definite matrix.

Consider a family of formulations that is parametrized by 
$l$~\emph{formulation parameters}, $c\in\mathcal C^l\subset\mathbb R^l$.
For a formulation with parameters $c$ we denote the principal 
part matrix $A^p{}_i{}^j(c)\in\mathbb R^{k\times k}$ (this can be 
the fully second order or the first order in time, second order in 
space principal part matrix~\eqref{eq:def_PP_matrix}). Assume that
$A^p{}_i{}^j$ depends continuously on $c$.

For the criterion to be applicable we also need to know that in every 
neighborhood $U_c\subset\mathcal C^l$ of each point $c\in\mathcal C^l$ 
there are formulations that do not possess a symmetrizer in the set of 
matrices, $\mathcal G$. For our examples we find this situation,
e.g. when the set of strongly hyperbolic 
formulations inside $\mathcal C^l$ has lower dimensionality.

The idea is then to show that by continuity $\mathcal G$ can only contain
symmetrizers for $c$ if the space of candidates for $c$,
$\mathcal G_c\subset\mathcal G$, is bigger than the space of candidates
for a generic $\tilde c\in U_c$.

We assume that $\mathcal G$ is linearly parametrized, i.e. there 
exists a linear map $G^{ij}:\mathbb R^n\rightarrow \mathbb R^{k\times k}$,
that assigns hermitian $k\times k$-matrices to parameters, 
$g\in\mathbb R^n$ such that $\mathcal G = G^{ij}(\mathbb R^n)$.
According to \cite{Gundlach:2005ta} a matrix $G^{ij}(g)$ is a candidate 
symmetrizer of a formulation $c\in\mathcal C^l$, if and only if the 
following matrix is hermitian for every spatial vector $s$
\begin{align}
S_i G^{ij}(g) A^p{}_j{}^k(c) s_p S_k,
\end{align}
with
\begin{align}
S_i =
\left(
\begin{array}{cc}
s_i & 0\\
0 & 1
\end{array}
\right).
\end{align}
Since $G^{ij}$ is a linear map this defines linear equations for the 
parameters $g$ of the form
\begin{align}
B(c)g = 0,
\end{align}
where $B:\mathcal C^l\rightarrow\mathbb R^{m\times n}$ is a continuous 
map, because we assume that $A^p{}_i{}^j$ is continuous.

In appendix \ref{app:rank_criterion} we show the following. If for 
fixed $N\in\mathbb N$ there is in every neighborhood of $c$ a 
formulation $\tilde c$ with $\mbox{rank}(B(\tilde c))=N$ such that 
$\mathcal G$ does not contain symmetrizers for $\tilde c$ and 
$\mbox{rank}(B(c))\geq N$ then there is no symmetrizer for $c$ 
in $\mathcal G$.

If $\mathcal G$ is the set of all hermitian $k\times k$-matrices 
and the rank of $B$ is continuous at $c$ this implies that $c$ 
is not symmetric hyperbolic. Hence,
it is not necessary to discuss positivity 
of matrices, one only needs to calculate the rank of $B$.


\subsection{Symmetric hyperbolicity of special Hamiltonian 
systems}


In section~\ref{sec:Sym_Hyp_FSO_systems} we derived general 
criteria for a fully second order system to be symmetric 
hyperbolic. We now discuss simplifications that can be 
achieved when the system additionally has a reduction to 
first order in time, second order in space with Hamiltonian 
structure in the principal part.

Assume that there exist matrices
$M=M^\dag$, $F^i$ and $V^{ij}=V^{(ij)\,\dag}$ such that
\begin{align}
\p_t^2q&=M^{-1}[V^{ij}-F^{\dagger(i}MF^{j)}]\p_i\p_jq\nonumber\\
&+[F^iM^{-1}+M^{-1}F^{\dag i}]M\p_i\p_tq+\mathcal S,
\label{eqn:Ham_Second}
\end{align}
where $\mathcal S$ denotes terms that do not contribute to the 
principal part.

The underlying Hamiltonian structure guarantees the
system a canonical candidate symmetrizer
\begin{align}
\label{eq:canon_cand_fully_2nd}
\bar{\mathcal H}^{ij} &= M\left(
\begin{array}{cc}
\mathcal A^{ij} & 0\\
0 & \mathbf{1}
\end{array}
\right),
\end{align}
the matrices $M\mathcal B^i$ and $M\mathcal A^{ij}$ are hermitian
for every $i,j$:
\begin{align}
\label{eq:def_mathfrak_H}
\nonumber
\mathfrak H_2^i &= M\mathcal B^{i} = M F^i + F^{i\,\dag} M,\\
\mathfrak H_3^{ij} &= M\mathcal A^{ij} = V^{ij} - F^{(i\,\dag} M F^{j)}.
\end{align}

However one needs to be careful. Although the canonical candidate
\eqref{eq:canon_cand_fully_2nd} is block diagonal one cannot assume
that there exists a positive definite symmetrizer with this block 
structure. However if $\mathcal B^i$ vanishes and the system is 
symmetric hyperbolic then it is easy to check that there exists a 
block diagonal symmetrizer (this was discussed previously
for general fully second order systems).

There are several special cases where the existence of the
canonical candidate \eqref{eq:canon_cand_fully_2nd} simplifies
the construction of a positive symmetrizer:

\emph{1. If $M$ and $\mathfrak H_3^{ij}$ are positive definite, 
the system is automatically symmetric hyperbolic.}

The proof is trivial, because the canonical candidate is positive 
definite.\qed

\emph{2. If $\mathcal A^{ij}$, $\mathcal B^i$ and $M$ commute for 
every $i,j$, then the system is symmetric hyperbolic if 
$\mathcal A^{ij}$ has only positive eigenvalues.}

$M$ is hermitian and therefore diagonalizable. The
matrices $\mathcal A^{ij}$, $\mathcal B^i$ and $M$ commute,
therefore each $\mathcal A^{ij}$ and $\mathcal B^i$ is diagonal
in the basis where $M$ is diagonal. In this basis it is then 
obvious that $M\bar{\mathcal H}$ is a candidate and if 
$\mathcal A^{ij}$ has only positive eigenvalues then
$M\bar{\mathcal H}$ is also positive definite.
\qed

\emph{3. The system~(\ref{eqn:Ham_Second}) is symmetric hyperbolic 
if there exists a matrix $\mathcal S$ such that the two matrices
\begin{align}
\mathcal SM,\qquad \mathcal SM\mathcal A^{ij}
\label{eqn:Ham_Second_cond_1}
\end{align}
are hermitian positive definite and
\begin{align}
\mathcal S M\mathcal B^i\label{eqn:Ham_Second_cond_2}
\end{align}
is symmetric. The symmetrizer is then $\mathcal S\bar{\mathcal H}$.}

Positivity of $\mathcal S\bar{\mathcal H}$ is obvious because of conditions
\eqref{eqn:Ham_Second_cond_1} and a straightforward calculation shows
that $\mathcal S\bar{\mathcal H}$ is also a candidate symmetrizer because
of \eqref{eqn:Ham_Second_cond_2}.
\qed


\section{Toy problems}
\label{section:Maxwell_Pure_Gauge}


We now use the criteria of the previous sections to analyze 
several toy problems. Our aim is to demonstrate the methods 
without considering equations with a complicated structure.


\subsection{Electromagnetism}
\label{section:Maxwell}


The structure of the Maxwell equations is very similar to that of 
the Einstein equations. In particular they share a very similar 
gauge sector. To expand the vacuum Maxwell equations in flat-space 
\begin{subequations}
\begin{align}
\label{eq:Maxwell_a}
\p_iB^i=0,  & \qquad  \p_iE^i=0,\\
\label{eq:Maxwell_b}
 \p_tB_i=-(\p\times E)_i , & \qquad  \p_tE_i= (\p\times B)_i.
\end{align}
\end{subequations}
with their largest gauge freedom we introduce the standard 
scalar and vector potentials $\phi$ and $A^i$, 
\begin{align}
\label{eq:def_E_B}
B_i=(\p\times A)_i, & \qquad  E_i=-\p_tA_i-\p_i\phi.
\end{align}
The gauge freedom is in the time derivative of $\phi$ and the 
divergence of $A_i$.
\paragraph*{Hamiltonian structure:}
In the canonical variables
$(A^i;\pi_i=-E_i)$ the dynamical part \eqref{eq:Maxwell_a} of the
system has Hamiltonian density
\begin{subequations}
\begin{align}
\nonumber
\mathcal{H}_M&=\frac{1}{2}\left[\pi_i\pi^i-(\p_iA^i)(\p_jA^j)
+(\p_iA^j)(\p^iA_j)\right]\\
&\qquad +\phi C_m,
\end{align}
\end{subequations}
and in analogy with GR, the ``momentum" constraint is
\begin{align} 
C_m\equiv\p_i\pi^i&= 0.
\end{align}
Here $\phi$ is considered a given field. Following the 
approach of \cite{Brown:2008cca} with the Maxwell equations, we promote 
$\phi$ to the status of an evolved field by introducing a canonical 
momentum $\pi$. In flat-space the dynamical vacuum Maxwell equations
with live gauge conditions can 
be expanded and written with Hamiltonian density
\begin{subequations}
\begin{align}
\mathcal{H}_D&=\mathcal{H}_M + \mathcal{H}_G,\\
\mathcal{H}_G&= (c_1-1)\phi C_m -\frac{1}{2} c_2 \pi^2- c_3 \pi \p_i A^i.
\label{eq:Maxwell_Ham}
\end{align}
\end{subequations}
where $\mathcal{H}_G$ denotes the Hamiltonian for the gauge
sector of the theory. The canonical variables are
$(A^i,\phi;\pi_i,\pi)$. In principle the 
$c_i$ are arbitrary given scalar functions on the spacetime, 
however from the point of view of hyperbolicity analysis we
will treat them as constants. Since we are in flat-space we 
do not denote the densitization of the variables. It is 
straightforward to generalize the system to curved space. The
equations of motion are
\begin{subequations}
\label{eq:Maxwell_eq_of_motion}
\begin{align}
\label{eq:Maxwell_eq_of_motion_2}
\p_tA^i &=   \pi^i - c_1 \p^i\phi,\\
\label{eq:Maxwell_eq_of_motion_1}
\p_t\phi &=  - c_2 \pi - c_3 \p_iA^i,\\
\label{eq:Maxwell_eq_of_motion_4}
\p_t\pi_i &= \p^j\p_jA_i - \p_i\p^jA_j- c_3 \p_i\pi,\\
\label{eq:Maxwell_eq_of_motion_3}
\p_t\pi &=   - c_1 \p_i\pi^i.
\end{align}
\end{subequations}
Note that in~\eqref{eq:Maxwell_Ham} the appearance of $\phi C_m$ 
results in the addition of the momentum constraint to the evolution
equations and to a rescaling of $\phi$ (we have
\eqref{eq:Maxwell_eq_of_motion_2} instead of \eqref{eq:def_E_B}
with $\pi_i=-E_i$). The constraint addition may alter the hyperbolicity 
of the system. Freezing $\phi$, identifying $\pi = \Gamma-\p_iA^i$ and 
choosing the remaining constants appropriately reduces the system 
to the KWB formulation of electromagnetism \cite{Knapp:2002fm}.

To see that the canonical momentum of $\phi$ should obey the
constraint $\pi=0$, note that we may write
\begin{subequations}
\begin{align}
\p_t\pi &=- c_1 C_m,\\
\p_tC_m &=- c_3 \p_i\p^i\pi.
\end{align}
\end{subequations}

A natural choice for the free parameters
is the Lorentz gauge $c_1 = c_2 = c_3 = 1$. In this case
equations \eqref{eq:Maxwell_eq_of_motion}
are a first order in time second order in space formulation
of decoupled wave equations.
In what follows we will consider an arbitrary gauge choice $c_i$ 
and derive conditions that the resulting PDE system be symmetric 
hyperbolic.

In the notation of \eqref{eq:Ham_Matrix} we have
$u=(\p_iA^l,\p_i\phi,\pi_l,\pi)^\dag$ and
\begin{subequations}
\begin{align}
M^{-1 \,lk}      & = \left(\begin{array}{cc}
 \gamma^{lk}     & 0   \\
 0               & -c_2
\end{array}\right),\\
F^{i\,l}{}_{k}    & =  \left(\begin{array}{cc}
 0       &     -c_1\eta^{il} \\
 -c_3\delta^i{}_k & 0   
\end{array}\right),\\
V^{ij}{}_{lk} & =  \left(\begin{array}{cc}
 -\delta^{i}_{(l}\delta^{j}_{k)}+\eta^{ij}\eta_{lk} & 0   \\
 0       & 0   
\end{array}\right).
\end{align}
\label{eq:Maxwell_Candidate}
\end{subequations}
Note that the structure of the gauge sector prevents the evolution
equation of $\pi$ from containing terms like $\p_i\p^i\phi$, since the
evolution of the constraint subsystem must be closed. The index
structure of the variables also prevents the evolution equation for
$A^i$ from containing terms like $\p_i\p^i\phi$. These two facts
together guarantee the empty row and column of $V^{ij}$. In fact any
system with similar structure in the gauge have the same property. We 
will see this for the Einstein equations in the following sections.

The expanded Maxwell equations
\eqref{eq:Maxwell_eq_of_motion} can be written in a fully second order form 
provided that $c_2\ne 0$. One obtains
\begin{subequations}
\label{eq:fully_2nd_Maxwell}
\begin{align}
\nonumber
\p_t^2 A^i &= \p_j\p^jA^i+\left(c_3^2/c_2-1\right)\p^i\p_jA^j\\
&\qquad - \left(c_1c_2-c_3\right)/c_2 \p^i\p_t\phi,\\
\p_t^2\phi &= c_1^2c_2\p_i\p^i\phi + (c_1c_2-c_3)\p_i\p_t A^i,
\end{align}
\end{subequations}
and we read off the principal part
\begin{subequations}
\begin{align}
\mathcal{A}^{ij\,l}{}_{k}&=
\left(\begin{array}{cc}
\eta^{ij}\delta^{l}{}_k+(\frac{c_3^2}{c_2}-1)
 \eta^{l(i}\delta^{j)}{}_k & 0 \\ 
 0 & c_1^2c_2\eta^{ij}
\end{array}\right),\label{eqn:Maxwell_first}\\
  \mathcal{B}^{i\,l}{}_{k} &= \left(\begin{array}{cc}
    0 & -\frac{1}{c_2}(c_1c_2-c_3)\eta^{il} \\
    (c_1c_2-c_3)\delta^i{}_k & 0\end{array}\right).
\end{align}
\end{subequations}

\paragraph*{Strong hyperbolicity:}
Following \cite{Gundlach:2005ta} we investigate strong hyperbolicity
by performing a $2+1$ decomposition in space. The principal symbol then
decomposes into a scalar and a vector block. The characteristic speeds in
the vector block are $\pm 1$ and this submatrix is always diagonalizable.
In the scalar block of the expanded Maxwell equations the
characteristic speeds are $\pm \sqrt{c_1 c_3}$. Thus, strong 
hyperbolicity requires that $c_1$ and $c_3$ share a sign, or one of them
vanishes. Further calculations reveal that
the system is not strongly hyperbolic in the latter case.

Strong hyperbolicity additionally demands a complete set of 
characteristic variables. They exist only if $c_1c_2=c_3\ne 0$,
i.e. if in the fully second order system \eqref{eq:fully_2nd_Maxwell}
the vector and scalar potentials are decoupled. The characteristic
variables are given by
\begin{subequations}
\begin{align}
U_{s\pm v_1} &= \p_t\phi \pm c_1\sqrt{c_2}\p_s\phi,\\
U_{s\pm  v_2}&= \p_tA_s \pm c_1\sqrt{c_2}\p_sA_s,\\
U_{A\pm 1}&= \p_sA_A\pm \p_t A_A.
\end{align}
\end{subequations}
The non-trivial characteristic speeds are $v_1=v_2=\pm c_1\sqrt{c_2}$ so
strong hyperbolicity requires $c_2> 0$ as well.

\paragraph*{Symmetric hyperbolicity:}
Before we start the construction of positive symmetrizers we check whether
the necessary conditions for the existence of positive candidates, i.e. strong
hyperbolicity and the rank criterion, are satisfied.
Thus, we immediately demand strong hyperbolicity so $\mathcal{B}^i$
vanishes, $c_2>0$, $c_3=c_1c_2\ne 0$.
Application of the rank criterion (see section \ref{sec:rank_criterion})
results in a similar restriction. We find that the rank of the
relevant matrix $B(c)$ changes if and only if $c_3=c_1c_2$.

The remaining matrix~$\mathcal{A}^{ij}$
is automatically symmetric. So, taking $H_1=\mathbf{1}$, $\phi^i=0$,
$\phi^{ij}=0$ in lemma~\ref{lem:fully_2nd_sym_hyp},
the system is obviously symmetric hyperbolic if
$\mathcal{A}^{ij}$ has positive eigenvalues. Positivity of
the eigenvalues with $c_1c_2=c_3$ is equivalent to the condition
$\frac12 < c_1^2c_2 <3$. In geometric units the characteristic speeds $v$
of the system satisfy $1/\sqrt{2}<|v|<\sqrt{3}$.

Instead of the canonical candidate we may also consider the case where
\begin{align}
\label{eq:Maxwell_antisym_matrix}
\phi^{ij\,l}{}_k &=
\left(
\begin{array}{cc}
b_1 \eta^{l[i}\delta^{j]}{}_k & 0\\
0 & 0
\end{array}
\right) = \phi^{ij\,\dag}{}_k{}^l,
\end{align}
with some constant $b_1$. The eigenvalues of the resulting candidate
symmetrizer are
\begin{align}
\nonumber
\lambda_1 &= \frac{1 + c_1^2 c_2 - b_1}{2},&
\lambda_2 &= \frac{3 - c_1^2 c_2 + b_1}{2},\\
\nonumber
\lambda_3 &= -1 + 2c_1^2 c_2 + b_1,&
\lambda_4 &= c_1^2 c_2,\\
\lambda_5 &= 1.
\end{align}
Hence, if $c_2>0$ we can always choose $b_1$ such that this
candidate is positive definite. If $c_1^2 c_2\leq 4/3$ then
$b_1=1$ is a possible value and for $c_1^2 c_2 > 4/3$
one can take $b_1=c_1^2c_2$.
Thus, every strongly hyperbolic formulation of Electromagnetism with 
Hamiltonian structure is symmetric hyperbolic.


\subsection{The pure gauge system}
\label{section:pure_gauge}


In this section we consider scalar fields $(T,X^i)$ with second 
order equations of motion. We determine the set of equations 
of motion with Hamiltonian structure which are strongly 
and symmetric hyperbolic. Finally we consider which equations 
of motion obtained for the lapse and shift when the fields are 
taken as coordinates.

\paragraph*{Hamiltonian structure:}
We start by introducing canonical momenta $(\Theta,\Pi_i)$
for the scalar fields 
$(T,X^i)$ and write the most general ansatz for the Hamiltonian 
without introducing new geometric objects such that the equations
of motion are linear and naturally written with ``$\p_0=
\frac1\alpha(\p_t-\beta^i\p_i)$'' 
derivatives (where $\alpha$ and $\beta^i$ are lapse and shift
of the background manifold slicing respectively)
\begin{align}
\label{eq:PG_Hamiltonian}
\mathcal{H}_{D} &= \frac{c_1\alpha}{2\sqrt{\gamma}}\Theta^2
+\frac{c_2\alpha}{2}\sqrt{\gamma}D_iTD^iT
 +\frac{c_3\alpha}{2\sqrt{\gamma}}\Pi_i\Pi^i\\
& +\frac{c_4\alpha}{2}\sqrt{\gamma}D_iX^jD^iX_j
+\frac{c_5\alpha}{2}\sqrt{\gamma}D_iX^jD_jX^i\nonumber\\
\nonumber& + c_6 \alpha\Theta D_iX^i + c_7 \alpha\Pi^iD_iT
+ \Theta\beta^iD_i T + \Pi_i\beta^jD_j X^i.
\end{align}
Terms like $\Theta X^i D_iT$ would make the equations of motion 
nonlinear. We could also adjust the system by additional source 
terms. In the notation of \eqref{eq:Ham_Matrix} we have 
$u=(\p_jX^k,\p_iT,\Pi_l,\Theta)^\dag$ and
\begin{subequations}
\label{eq:PG_cand}
\begin{align}
\label{eq:matrix_Minv_pg}
M^{-1}{}^{lk} &= \frac{\alpha}{2\sqrt{\gamma}}\left(\begin{array}{cc}
c_3\gamma^{lk} & 0 \\
0 & c_1\end{array}\right),\\
\label{eq:matrix_F_pg}
F^{il}{}_{k} &= \frac{\alpha}{2}\left(\begin{array}{cc}
 0 & c_7\delta^{i}{}_k \\
c_6\gamma^{il} & 0\end{array}\right)
+ \frac{\beta^i}{2}\left(\begin{array}{cc}
 \delta^l_k & 0 \\
0 & 1\end{array}\right),\\
\label{eq:matrix_V_pg}
V^{ij}{}_{kl}&= \frac{\alpha\sqrt{\gamma}}{2}\left(\begin{array}{cc}
c_4\gamma^{ij}\gamma_{lk}+c_5\delta^{(i}{}_l\delta^{j)}{}_{k} & 0 \\
 0 &  c_2\gamma^{ij}\end{array}\right).
\end{align}
\end{subequations}
The pure gauge system does not carry any constraints, so unlike
in the Maxwell equations the $V^{ij}{}_{kl}$ matrix may contain
terms on both diagonal components. Otherwise the block structure 
guarantees that the the matrices look very similar to those 
of the Maxwell theory~\eqref{eq:Maxwell_Candidate}. In fact 
electromagnetism is a special case of the pure gauge system.

\paragraph*{Fully second order system:}
The number of parameters now present in the system complicates the 
analysis of hyperbolicity. To simplify the calculations we immediately 
switch to the fully second order version of the system, in which the 
equations of motion have principal matrices
\begin{subequations}
\begin{align}
\mathcal A^{ij} &= \alpha^2 \mathcal A_0^{ij} - 2 \alpha\beta^{(i}\mathcal B_0^{j)} - \beta^i\beta^j,\\
\mathcal B^i &= \alpha\mathcal B_0^i + 2\beta^i,
\end{align}
\end{subequations}
where the principal part in normal slicing ($\alpha=1$, $\beta^i=0$)
becomes
\begin{subequations}
\label{eq:FSO_pg}
\begin{align}
\label{eq:matrix_A_pg_2nd}
\mathcal{A}_0{}^{ijl}{}_{k}&=\left(\begin{array}{cc}
 \bar{c}_1\gamma^{ij}\delta^l{}_k + \bar{c}_2\gamma^{l(i}\delta^{j)}{}_k & 0 \\ 
 0   & \bar{c}_3\gamma^{ij}
\end{array}\right),\\
\label{eq:matrix_B_pg_2nd}
\mathcal{B}_0{}^{i}{}_{jl} &=\left(\begin{array}{cc}
 0 & \bar{c}_4\delta^i{}{}_j \\ 
 \bar{c}_5\delta^i{}_l & 0
\end{array}\right),
\end{align}
\end{subequations}
and we write
\begin{align}
\bar{c}_1 &= c_3c_4, &\quad \bar{c}_2=\frac{c_3}{c_1}(c_1c_5-c_6^2),\\
\bar{c}_3 &= \frac{c_1}{c_3}(c_2c_3-c_7^2),  &\quad 
\bar{c}_4=\frac{1}{c_1}(c_1c_7+c_3c_6),\\
\bar{c}_5 &= \frac{1}{c_3}(c_1c_7+c_3c_6), &
\end{align}
to simplify the matrices.
In what follows we assume normal slicing, i.e. $\alpha=1$,
$\beta^i=0$, $\mathcal A_0=\mathcal A$ and $\mathcal B_0=\mathcal B$.

Since we require $M^{-1}$ invertible, $c_1$ and 
$c_3$ must be nonzero, which means that if one of $\bar{c}_4$ and 
$\bar{c}_5$ vanish, they both must.

Given the linear fully second order system~\eqref{eq:FSO_pg} one can 
always derive a Hamiltonian formulation, provided 
$\bar c_4\neq 0\neq \bar c_5$, using
\begin{align*}
\nonumber
c_1 &= \frac{c_3\bar c_5}{\bar c_4},&
c_2 &= \frac{\bar c_3\bar c_4+\bar c_5 c_7^2}{c_3 \bar c_5},\\
c_4 &= \frac{\bar c_1}{c_3},&
c_6 &= \bar c_5\left(1-\frac{c_7}{\bar c_4}\right),
\end{align*}
and
\begin{align*}
c_5 &= \frac1{c_3\bar c_4}\left(\bar c_2\bar c_4+\bar c_4^2 \bar c_5 
-2 \bar c_4 \bar c_5 c_7 +\bar c_5 c_7^2\right),
\end{align*}
together with~\eqref{eq:PG_cand} where $c_3$ and $c_7$ can be chosen 
freely.

\paragraph*{Strong hyperbolicity:} We assume a generic choice of the parameters
$\bar c_1,\ldots,\bar c_5$, i.e. no degeneracies in the characteristic 
variables occur. Then we perform a $2+1$ decomposition in space and introduce 
the auxiliary quantities
\begin{align}
\lambda_X = \sqrt{\bar{c}_1+\bar{c}_2}, &\qquad \lambda_T = \sqrt{\bar{c}_3},
\end{align}
the speeds of the decoupled wave equations (in the scalar sector)
in normal slicing when 
$\mathcal{B}$ vanishes. In the vector sector the fully second order 
characteristic variables are
\begin{align}
U_{A\pm \lambda} & = \p_tX_A \pm \lambda_V \p_sX_A.
\end{align}
with speeds $\lambda_V = \sqrt{\bar{c}_1}$, so $\bar{c}_1>0$ is required.
With non-vanishing $\mathcal{B}$ the characteristic speeds in the scalar 
sector~$\pm(v_T,v_X)$ are modified to 
\begin{subequations}
\begin{align}
2v_T^2&= \lambda_T^2+\lambda_X^2+\bar{c}_4\bar{c}_5\nonumber\\
&+\sqrt{(\lambda_T^2-\lambda_X^2)^2+\bar{c}_4^2\bar{c}_5^2+
2\bar{c}_4\bar{c}_5(\lambda_X^2+\lambda_T^2)}    ,\\
2v_X^2&= \lambda_T^2+\lambda_X^2+\bar{c}_4\bar{c}_5\nonumber\\
&-\sqrt{(\lambda_T^2-\lambda_X^2)^2+\bar{c}_4^2\bar{c}_5^2+
2\bar{c}_4\bar{c}_5(\lambda_X^2+\lambda_T^2)}.
\end{align}
\end{subequations}
Note that if $\bar{c}_4$ or $\bar{c}_5$ vanish we recover the decoupled 
speeds. The characteristic variables in the scalar sector are 
\begin{subequations}
\begin{align}
U_{s\pm {v_X}}& = \p_tX_s\pm v_X\p_sX_s\nonumber\\
&\pm\frac{\bar{c}_5v_X}{v_X^2-\lambda_T^2}\big(\p_tT\pm v_X\p_sT\big),\\
U_{s\pm {v_T}}& = \p_tT\pm v_T\p_sT\nonumber\\
&\pm\frac{\bar{c}_4v_T}{v_T^2-\lambda_X^2}
\big(\p_tX_s\pm v_T\p_sX_s\big),
\end{align}
\end{subequations}
when $v_X\ne\lambda_T$ and $v_T\ne\lambda_X$. Further calculation 
using the fact that $v_X^2v_T^2=\lambda_X^2\lambda_T^2$ reveals that 
$v_X = \lambda_T$ if and only if $v_T = \lambda_X$, which is only 
possible if $\bar{c}_4$ and $\bar{c}_5$ vanish. In this case the 
variables decouple and the only condition for strong hyperbolicity 
is that the speeds are real. In summary, the system is strongly 
hyperbolic provided that
\begin{align}
\lambda_V^2=\bar{c}_1 & > 0, 
\quad  & \lambda_X^2 = \bar{c}_1+\bar{c}_2 \ne 0,
\nonumber\\
\lambda_T^2=\bar{c}_3 & \ne 0, 
\quad  & v_T^2 > 0,\nonumber\\
 v_X^2    &  >  0. \quad  & 
\end{align}
To obtain the characteristic variables for the fully
second order system in a general slicing from the expressions in
normal slicing one can e.g. simply replace the $\p_t$ derivative
in those expressions by a ``$\p_0$'' derivative:
\begin{align}
\p_t X \rightarrow \frac1\alpha\left(\p_t X - \beta^s \p_s X\right).
\end{align}
E.g. in the vector sector one obtains
\begin{align}
U_{A\pm \lambda} & = \frac1\alpha\left(\p_tX_A - (\beta^s \mp \alpha\lambda) \p_sX_A\right).
\end{align}
Given a characteristic speed $v$ in normal slicing it becomes
$\alpha v + \beta^s$ for general slicings.

\paragraph*{Symmetric hyperbolicity:}
As in the case of electromagnetism we first demand strong hyperbolicity.
The second necessary condition, the rank criterion, cannot be applied
here, because in the parameter space we do not find a dense
set of formulations which are not symmetric hyperbolic.

We ask whether the modified canonical candidate
\begin{align}
\bar H^{ij}{}_{kl}=\left(
\begin{array}{cc}
M_{km}\mathcal A^{ij\,m}{}_l + \phi^{[ij]}{}_{[kl]} & 0\\
0 & M_{kl}
\end{array}
\right),
\end{align}
with the matrix $\phi^{ij}{}_{kl}$ defined in \eqref{eq:Maxwell_antisym_matrix}
is positive. We immediately get that
\begin{align}
c_3 &> 0,&
\bar c_4/\bar c_5 &> 0,&
\bar c_3>0
\end{align}
must be satisfied, because the candidate is block diagonal with one 
diagonal block which has those elements.

The eigenvalues of the remaining upper left block are
\begin{align}
b_1 - \frac{\bar c_2 - 2 \bar c_1}{2 c_3},&&
\frac{2 \bar c_1 + \bar c_2}{2 c_3} - b_1,&&
\frac{\bar c_1 + 2 \bar c_2}{c_3} + 2 b_1.
\end{align}
Hence, we can choose $b_1$ such that the modified candidate becomes positive
provided that
\begin{align}
\bar c_1 &> 0,&
\bar c_1+\bar c_2 &> 0.
\end{align}
Strong hyperbolicity of the system only implies that
$\bar c_1>0$, $\lambda_T^2\lambda_X^2>0$ and
$(\lambda_T^2-\lambda_X^2)^2+\bar{c}_4^2\bar{c}_5^2+
2\bar{c}_4\bar{c}_5(\lambda_X^2+\lambda_T^2)>0$.
Hence, as opposed to Electromagnetism, we get strongly 
hyperbolic formulations where the modified canonical candidate 
is not positive definite.

However, here the expressions are still simple enough to consider
all matrices that can be constructed from the metric. We derive 
general candidate symmetrizers in that class and analyze their
positivity. We find that for a generic choice of the parameters
every strongly hyperbolic formulation is also symmetric hyperbolic.

\paragraph*{Discussion:} It is not clear by examining 
matrices~(\ref{eq:FSO_pg}) exactly what restriction Hamiltonian 
structure puts upon the pure gauge system. The only condition 
is that when either of $\bar{c}_4$ or $\bar{c}_5$ vanish then 
they both do. One may analyze the level of hyperbolicity of the 
system even if Hamiltonian structure is abandoned. The two cases 
$\bar{c}_4=0$ and $\bar{c}_5=0$ are similar, so we describe only 
the first. We find that whenever $\lambda_X\ne\lambda_T$ the 
system is both strongly and symmetric hyperbolic whenever $\lambda_T,
\lambda_X>0$.

With this example we demonstrate that there are strongly hyperbolic
Hamiltonian formulations where apparently the existence of the
canonical candidate does not simplify the symmetric hyperbolicity
analysis. Moreover, in the fully second order form of the equations
of motion Hamiltonian structure does not play a role. Thus, if general
relativity is considered then we do not expect to find symmetrizers
through a simple modification of the canonical candidate.

\paragraph*{Evolution of the lapse and shift:}
To derive equations of motion for the lapse and shift from 
our pure gauge system we may simply use the relationship between the 
lapse and shift $\alpha,\beta^i$ and the local coordinates
in time and space $(t,x^i$)
\begin{align}
\alpha&= -\frac{1}{n^a\p_at}, \qquad \beta^i= -\alpha n^a\p_ax^i,
\end{align}
with unit normal to the hypersurface $n_a$, and take $T=t$, $X^i=x^i$. 
The lapse and shift evolution equations do not inherit the same 
principal part as the pure gauge system from which they were 
derived. Furthermore, it is not possible to give the fully second 
order equations of motion for the lapse and shift until the gauge 
is coupled to particular equations of motion for the metric and 
extrinsic curvature.
 

\section{Formulations in numerical GR}
\label{section:ADM_Gauge}


In this section we investigate formulations of GR with 
Hamiltonian structure. The aim is to derive a symmetric 
hyperbolic Hamiltonian formulation with properties that 
are believed to be important for stable numerical simulations 
and to relate this formulation to currently existing ones.


\subsection{The ADM system}
\label{section:ADM}


The starting point in the derivation of a new formulation is 
the ADM system \cite{Arnowitt:1962hi}. It can be derived 
through a $3+1$ decomposition of the vacuum Einstein equations 
\cite{York}. If one writes the result in geometric variables  
$(\gamma_{ij},K_{ij})$, where $\gamma_{ij}$ is the 3-metric and 
$K_{ij}$ is the extrinsic curvature induced in a spacelike slice, 
then one obtains equations of motion
\begin{subequations}
\label{eq:dynamical_ADM}
\begin{align}
\p_t\gamma_{ij}&=-2\alpha K_{ij}+\Lie_\beta\gamma_{ij},
\label{eq:definition_K}\\
\p_t K_{ij}&= -D_iD_j\alpha+\alpha[R_{ij}-2K_{ik}K^k_j+K_{ij}K]
\nonumber\\
\label{eq:dynamical_ADM_b}
&\qquad+\Lie_\beta K_{ij}.
\end{align}
\end{subequations}
Equation~\eqref{eq:definition_K}
defines $K_{ij}$ and eqn.~\eqref{eq:dynamical_ADM_b} is the projection 
of the Einstein equations out of the slice. By contracting into the slice 
one also obtains the Hamiltonian $C$ and momentum $C_i$ constraints, which 
are given by 
\begin{subequations}
\label{eq:ADM_constraints}
\begin{align}
C&=R+K^2-K_{ij}K^{ij}=0,\\
C_i&=D_j(K^{ij}-\gamma^{ij} K)=0.
\end{align}
\end{subequations}
The ADM equations can also be written in terms of 
\emph{canonical ADM variables}, $(\gamma_{ij};\pi^{ij})$,
through the relation
\begin{align}
\label{eq:rel_pi_K}
\pi^{ij}&=\sqrt{\gamma}(K\gamma^{ij}-K^{ij}).
\end{align}
In these variables the dynamical ADM equations 
\eqref{eq:dynamical_ADM} have a Hamiltonian structure with 
Hamiltonian
\begin{align}
{\cal H}_{\textrm{ADM}}=
\int(-{\alpha} C+2\beta^iC_i)\sqrt{\gamma}
{\textrm d}^3x,\label{eqn:H_ADM}
\end{align}
where the constraints must be rewritten in terms of the
canonical ADM variables.

The constraint evolution system is closed, so in the absence 
of a spatial boundary they need only be satisfied in one time-slice 
to guarantee their satisfaction later in time. This justifies 
analyzing the PDE properties of the system without assuming 
the constraints. When a spatial boundary is present suitable constraint 
preserving boundary conditions are also required.

In the ADM formulation the lapse $\alpha$ and shift $\beta^i$
are thought of as given fields. The formulation is weakly, but 
not strongly hyperbolic (the principal symbol has real eigenvalues). 
The initial value problem is not well-posed.


\subsection{Gauge conditions}
\label{section:Pure_Gauge}


In the introduction we noted that from the free-evolution PDEs
point of view, the gauge freedom of GR has two aspects, the 
first of which is the ability to choose arbitrary equations 
of motion for the lapse and shift. Choices in which either 
the lapse or shift evolve dynamically are known as live gauge
conditions. In this section we describe two popular live gauge 
conditions. 

\paragraph*{Harmonic gauge:} In terms of the lapse and shift the 
harmonic gauge condition $\Box X^a=0$ becomes
\begin{subequations}
\begin{align}
\p_t\alpha&=-\alpha^2K+\beta^p\alpha_{,p},\\
\p_t\beta_i&=\beta^j\p_j\beta^i-\alpha \p^i\alpha
+\alpha^2\gamma^{jk}{\Gamma^i}_{jk}.
\end{align}
\end{subequations}
The use of this gauge was the key component in the first proofs 
that GR admits a well-posed initial value 
problem~\cite{Foures-Bruhat:1952}, since when it is appropriately 
coupled to the EE's they become a particularly simple symmetric 
hyperbolic system of wave equations. When the gauge is modified 
with arbitrary source terms hyperbolicity of that formulation is 
not affected~\cite{Garfinkle:2001ni}. The modified gauge is known 
as generalized harmonic gauge or GHG.

\paragraph*{Puncture gauge:} We call the combination of the 
popular Bona-Mass\'o lapse and gamma driver shift conditions
\begin{subequations}
\label{eq:puncture_gauge}
\begin{align}
\p_t\alpha&=-\mu_L\alpha^2K+\beta^p\alpha_{,p},
\label{eqn:BM_gauge}\\
\p_t\beta_i&=\mu_S\tilde{\Gamma}^i+\beta^j\p_j\beta^i-\eta\beta.
\label{eqn:Gamma_driver}
\end{align}
\end{subequations}
with
\begin{align}
\tilde{\Gamma}^i&=\gamma^{1/3}\gamma^{jk}\Gamma^{i}_{jk}
+\frac{1}{3}\gamma^{1/3}\gamma^{ji}\Gamma^k{}_{kj}
\end{align}
the puncture gauge. In fact there are various choices of 
parameters inside the conditions which we do not consider here; 
for a review of the choices made by various groups see table 1 
of~\cite{Gundlach:2006tw}. Our gamma driver 
condition~\eqref{eqn:Gamma_driver} can be obtained by integrating 
the most commonly implemented condition once in 
time~\cite{vanMeter:2006vi}. Stationary data for the lapse 
condition was discussed in~\cite{Hannam:2006vv,Brown:2007nt,
Garfinkle:2007yt}.


\subsection{Hamiltonian formulations}
\label{sec:Hamiltonian_formulations}


Now we are interested in whether or not strongly or
symmetric hyperbolic Hamiltonian formulations with a large 
class of live gauge conditions can be derived. Our 
Hamiltonian formulations are described in terms of the 
\emph{canonical variables} 
$(\gamma_{ij},\alpha,\beta^i;\pi^{ij},\sigma,\rho_i)$.
For solutions of the Einstein equations $\sigma$ and $\rho_i$ are
constrained to vanish.

To get a Hamiltonian formulation in other variables, e.g. densitized
ones, one can apply a symplectic transformation to obtain the correct
Hamiltonian. This does not alter hyperbolicity. In general, if the
desired position variables are $f(q)$ then the corresponding canonical
momenta are $f'(q)^{-\dag}p$, where $f'$ is the Jacobi matrix of $f$
($f$ must not contain derivatives of $q$).

To derive formulations we follow the approach of 
Brown~\cite{Brown:2008cca}, i.e. we add appropriate gauge terms 
to the Super-Hamiltonian $\mathcal H_{\mathrm{ADM}}$:
\begin{align}
\label{eq:general_Hamiltonian}
\mathcal H = \mathcal H_{\mathrm{ADM}} + \mathcal H_{\mathrm GHG}
+ \int d^3x \left(\Lambda\sigma +\Omega^i \rho_i\right),
\end{align}
where $\mathcal H_{\mathrm{ADM}} + \mathcal H_{\mathrm GHG}$ is the
Hamiltonian for Brown's generalized harmonic formulation \cite{Brown:2008cca}.
The terms that appear in the principal part are
\begin{align}
\nonumber
&\mathcal H_{\mathrm GHG} =
\beta^{i}\sigma D_{i}\alpha
+ \beta^{i}\rho_j\p_i\beta^j
+ \alpha^2\rho_i\Gamma^i_{jk}\gamma^{jk} - \alpha\rho^{i}D_{i}\alpha \\
&\qquad
+ \frac1{8\sqrt{\gamma}}
\big(-4 \alpha^2 \gamma_{ij}\pi^{ij}\sigma
 + \alpha^3 \sigma^2
 - 4 \alpha^3 \rho_{i} \rho_{j}\gamma^{ij}\big).
\end{align}
The terms in $\Lambda$ and $\Omega^i$ that affect the principal part of
the equations of motion are linear in the canonical momenta
and in the first spatial derivatives of the positions.

We restrict the possible terms further by considering those gauges 
for which the shift appears in the principal part only through its 
spatial derivatives and the obvious advection terms. 

This is of course a restriction of the admissible gauges. But the live gauge
conditions that are used in numerical GR are usually of this form. Moreover in
those gauges the hyperbolicity analysis is simpler, because, as we will
see, one may linearize around Minkowski space without loss of generality.
Furthermore the discussion of Sect.~\ref{sec:Decoupled_variables} is 
applicable.

One also finds strongly hyperbolic formulations without this ``$\p_0$" 
restriction, but they are not considered here. We get the following 
expressions
{
\allowdisplaybreaks
\begin{align}
\nonumber
\Lambda &=
-C_1 \alpha^2 \gamma^{-1/2} \gamma_{ij} \pi^{ij} +
C_4 \alpha^3 \gamma^{-1/2} \sigma+\\
&\qquad +
C_7\left(\alpha D_i\beta^i - \frac12\alpha \gamma^{jk} \beta^i \p_i\gamma_{jk}\right)
\end{align}
and
\begin{align}
\Omega^i &=
C_2 \alpha^2 \Gamma^i_{jk} \gamma^{jk} +
C_3 \alpha^2 \Gamma^k_{kj} \gamma^{ji}+\\
\nonumber&\qquad -
C_5 \alpha \gamma^{ij} D_j\alpha
- C_6 \gamma^{-1/2} \alpha^3 \rho_j \gamma^{ij}.
\end{align}
}
We denote $C_0,\ldots,C_{7}$ \emph{formulation parameters}
and obtain Brown's Hamiltonian \cite{Brown:2008cca} when $C_7=0$
and we replace
\begin{align}
\nonumber
C_1&\rightarrow C_1-1/2,&
C_2&\rightarrow C_2-1,&
C_4&\rightarrow C_4-1/8,\\
C_5&\rightarrow C_5-1,&
C_6&\rightarrow C_6-1/2.
\end{align}

For those formulations we get in the notation of \eqref{eq:Ham_Matrix} that
$u=(\p_i\gamma_{km},\p_i\alpha,\p_i\beta^k,\pi^{km},\sigma,\rho_k)^\dag$
and
\begin{subequations}
\begin{widetext}
\begin{align}
\label{eq:C1free_PP}
M^{-1}_{kl\,mn} &=\alpha\gamma^{-1/2}\left(
\begin{array}{ccc}
2\gamma_{k(m}\gamma_{n)l} - \gamma_{kl}\gamma_{mn} &
- (1/2+C_1)\alpha\gamma_{kl} &
0\\
-(1/2+C_1)\alpha\gamma_{mn} & (1/4+2C_4)\alpha^2 &
0\\
0&
0&
-(1+2C_6)\alpha^2\gamma^{km}
\end{array}
\right),\\
F^{j}{}_{kl}{}^{mn} &=\left(
\begin{array}{ccc}
\beta^j\delta^{(m}_l\delta^{n)}_k & 0 & 2\delta^j_{(l}\gamma_{k)m}\\
0& \beta^j &
C_7\alpha\delta^j_m\\
\alpha^2\left((1+C_2)\gamma^{j(m}\gamma^{n)k}-
\frac12(C_2-C_3+1)\gamma^{jk}\gamma^{mn}\right) & 
-(1+C_5)\alpha\gamma^{jk} &
 \beta^j\delta_m^k
\end{array}
\right),\\
V^{ij\,kl\,mn} &=\gamma^{1/2}\left(
\begin{array}{ccc}
V_{11}^{ijklmn} & \gamma^{i(k}\gamma^{l)j} - \gamma^{ij}\gamma^{kl} & 0\\
\gamma^{i(m}\gamma^{n)j} - \gamma^{ij}\gamma^{mn} & 0 & 0\\
0 & 0 & 0
\end{array}
\right),
\end{align}
\end{widetext}
\end{subequations}
where
\begin{align*}
V_{11}^{ij\,kl\,mn} &= -\alpha\big(
\gamma^{ij}\gamma^{kl}\gamma^{mn}
+\gamma^{i(k}\gamma^{l)(m}\gamma^{n)j}\\
&\qquad
+\gamma^{i(m}\gamma^{n)(k}\gamma^{l)j}
-\gamma^{ij}\gamma^{k(m}\gamma^{n)l}\\
&\qquad
-\gamma^{kl}\gamma^{i(m}\gamma^{n)j}
-\gamma^{mn}\gamma^{i(k}\gamma^{l)j}
\big)/2.
\end{align*}
Various entries of $V^{ij\,kl\,mn}$ vanish for exactly the same 
reason as those in the Maxwell system~\eqref{eq:Maxwell_Candidate}.

To see that for the hyperbolicity analysis one may linearize around
Minkowski space observe that 
when $A^p(\det(\gamma_{ij}),\alpha,\beta^i)$ is the first order in time, 
second order in space principal part for an arbitrary background 
solution and
\begin{align}
\label{eq:sim_matrix}
\nonumber
T=\mbox{diag}(&\alpha^{-1/2}\gamma^{-1/4},\alpha^{1/2}\gamma^{-1/4},
\alpha^{1/2}\gamma^{-1/4},\\
&\alpha^{-1/2}\gamma^{1/4},\alpha^{-3/2}\gamma^{1/4},
\alpha^{-3/2}\gamma^{1/4})
\end{align}
then
\begin{align}
T^{-1}A^p(\det(\gamma_{ij}),\alpha,\beta^i)T = 
\alpha A^p(1,1,0)+\beta^p \mathbf 1.
\end{align}
Diagonalizability and eigenvalues are not altered by a similarity
transformation, i.e. to analyze strong hyperbolicity one may consider
the matrix $s_pA^p(1,1,0)$ for an arbitrary spatial vector $s$.

Concerning symmetric hyperbolicity a short calculation shows that when $H$ is
a symmetrizer of $A^p(1,1,0)$ then $T^{\dag}H T$ is a symmetrizer of
$A^p(\det(\gamma_{ij}),\alpha,\beta^i)$ for an arbitrary background metric.
In what follows we assume $\alpha=1=\gamma$ and $\beta^i=0$.

\paragraph*{Strong hyperbolicity:}
Again we perform a $2+1$ decomposition in space \cite{Gundlach:2005ta}, and the
principal symbol decomposes into three submatrices, a scalar, vector and
trace-free tensor block. The characteristic speeds in the trace-free tensor
and vector sector are
\begin{align}
\label{eq:tensor_vector_speeds}
\lambda_{TF}^2 &= 1,&
\lambda_V^2 &= 1+C_2
\end{align}
respectively. In the scalar block one finds
\begin{subequations}
\label{eq:Speeds_v+-}
\begin{align}
\label{eq:Speed_v+}
2v_+^2&=\lambda_+^2+\lambda_-^2-C_5C_7\\
\nonumber
&+\sqrt{(\lambda_+^2-\lambda_-^2)^2+C_5^2C_7^2
-2C_5C_7(\lambda_+^2+\lambda_-^2)},\\
\label{eq:Speed_v-}
2v_-^2&=\lambda_+^2+\lambda_-^2-C_5C_7\\
\nonumber
&-\sqrt{(\lambda_+^2-\lambda_-^2)^2+C_5^2C_7^2
-2C_5C_7(\lambda_+^2+\lambda_-^2)},
\end{align}
\end{subequations}
where
\begin{align}
\label{eq:defn_lam_pm}
2\lambda_\pm^2&=(2C_1+C_2+C_3-C_7+2)\nonumber\\
&\qquad\pm(-2C_1+C_2+C_3+C_7).
\end{align}
Note that the speeds in the scalar sector are very similar to those 
of the pure gauge system, and simplify significantly in the special 
case $C_5=0$ or $C_7=0$ (or both). They simplify further when 
$\lambda_+=\lambda_-$.

Using the techniques described in 
App.~\ref{app:strong_hyp_cond} we can classify the formulations with 
respect to their strong hyperbolicity. We find three families 
${\mathcal F}_{i=1,2,3}$, of strongly hyperbolic fully second order 
formulations.

From the vector block of the principal symbol it is obvious that the 
condition
\begin{align}
\label{eq:vector_conditions}
C_6 &= (\lambda_V^2-1)/2
\end{align}
must be satisfied in every case. Moreover it is clear that the 
characteristic speeds must be real, i.e.
\begin{align}
\lambda_V^2&\geq 0, &
v_{+/-}^2&\geq 0.
\end{align}
The three families correspond to different solutions of the requirement
that the scalar block is diagonalizable.

\paragraph*{Family $\mathcal{F}_1$:} The first family has four free parameters, 
$(\lambda_+,\lambda_-,C_5,C_7)$, and one obtains strongly hyperbolic 
formulations if $\lambda_-\ne 0$, $v_+^2\neq v_-^2$,
\begin{subequations}
\label{eq:scalar_four_param}
\begin{align}
\label{eq:scalar_four_param_a}
0 &\neq (1+2C_5)\lambda_-^2 - \lambda_+^2,\\
\label{eq:scalar_four_param_b}
2\lambda_V^2&=3\lambda_+^2-1
+\frac{C_5\lambda_-^2+3C_7\lambda_+^2}
{(1+2C_5)\lambda_-^2 - \lambda_+^2}C_5,\\
\label{eq:scalar_four_param_c}
8C_4&=\lambda_-^2-1+\frac{C_5\lambda_-^2+3C_7\lambda_+^2}
{(1+2C_5)\lambda_-^2-\lambda_+^2}C_7.
\end{align}
\end{subequations}
where we have replaced $C_1$, $C_2$ and $C_3$ using the 
definitions~\eqref{eq:defn_lam_pm} and \eqref{eq:tensor_vector_speeds} 
of $\lambda_{\pm}$ and $\lambda_V$.

\paragraph*{Family $\mathcal{F}_2$:} 
If the inequality \eqref{eq:scalar_four_param_a} is not satisfied,
but $\lambda_-\neq 0$, $v_+^2\neq v_-^2$ still hold, then the 
solution~\eqref{eq:scalar_four_param} becomes singular. One obtains 
another family with three free parameters, $(\lambda_-,\lambda_V,C_7)$:
\begin{subequations}
\label{eq:scalar_three_param}
\begin{align}
12C_4 &=
3\lambda_-^2 - 3 - 9 C_7 - (1+6C_7)(\lambda_V^2-1),\\
C_5 &= -\frac{3 C_7}{1 + 6 C_7},
\quad 
\lambda_+^2 = \frac{\lambda_-^2}{1+6C_7}.
\end{align}
It must satisfy
\begin{align}
C_7 &\neq 0,&
C_7 &\neq -1/6.
\end{align}
\end{subequations}
The first of these inequalities guarantees that there are no more 
repeated speeds and the second prevents the solution from becoming
singular, both of which are special cases that contain no further 
strongly hyperbolic formulations. The characteristic speeds in the 
scalar sector become
\begin{align}
\nonumber
2(1+6C_7)v_{\pm}^2 &=  3C_7(2\lambda_-^2+C_7)+2\lambda_-^2\\
&\qquad \pm C_7\sqrt{9(2\lambda_-^2+C_7)^2+12\lambda_-^2}.
\end{align}

\paragraph*{Family $\mathcal{F}_3$:} If the scalar block has only one 
eigenvalue ($v_+^2 = v_-^2$) one obtains a one parameter family of 
formulations with $\lambda_+>0$:
\begin{align}
\label{eq:scalar_one_param_equal_speeds}
\nonumber
\lambda_-&=\lambda_+,&C_5 &= 0,& C_7 &= 0,\\
2\lambda_V^2 &= 3\lambda_+^2-1,& C_4 &= 1/8(\lambda_+^2-1).&&
\end{align}
This is the limit of $\mathcal F_1$ \eqref{eq:scalar_four_param}
for $\lambda_+\to\lambda_-$ along curves with $C_5= 0 = C_7$.
But if one takes in \eqref{eq:scalar_four_param} the limit 
$v_+\rightarrow v_-$ then the result is in general not strongly 
hyperbolic. To obtain \eqref{eq:scalar_one_param_equal_speeds} one 
may also take the limit $C_7\rightarrow0$, 
$\lambda_V^2\rightarrow (3\lambda_-^2-1)/2$ of $\mathcal F_2$
\eqref{eq:scalar_three_param} 
(in any order). Again the case $C_7=0$ is not strongly hyperbolic for 
general choices of $\lambda_V$.

\paragraph*{Fully second order system:}
The fully second order equations of motion can be constructed for 
each of three strongly hyperbolic systems $\mathcal{F}_i$. We present
only $\mathcal{F}_3$. In terms of variables 
$u_{kl}=(\gamma_{kl},\alpha,\beta_k)^\dag$ the two parameter strongly 
hyperbolic system $\mathcal{F}_3$ has the fully second order principal 
part
\begin{align}
\p_t^2u_{kl}&= \mathcal A^{ij}{}_{kl}{}^{mn}\p_i\p_ju_{mn}
+\mathcal B^{i}{}_{kl}{}^{mn}\p_t\p_iu_{mn}.
\end{align}
The principal matrices $\mathcal A^{ij}{}_{kl}{}^{mn}$ and 
$\mathcal B^{i}{}_{kl}{}^{mn}$ are 
\begin{subequations}
\begin{widetext}
\begin{align}
\label{eq:AijBi_Einstein}
\mathcal A^{ij}{}_{kl}{}^{mn} &=
\left(
\begin{array}{ccc}
\mathcal A_{11}^{ij}{}_{kl}{}^{mn} &
\bar C_5\gamma^{ij}\gamma_{kl} & 0\\
0 & \bar C_{6}\gamma^{ij} & 0\\
0 & 0 & \bar C_{7} \gamma^{ij}\delta_k^m 
+ \bar C_{8} \gamma^{m(i}\delta^{j)}_k
\end{array}
\right),\quad
\mathcal B^{i}{}_{kl}{}^{mn} &=
\left(
\begin{array}{ccc}
0 & 0 &
\bar B_1\delta^i_m\gamma_{kl}\\
0 & 0 & 0                   \\
\bar B_2 \gamma^{ik}\gamma^{mn} &
\bar B_3 \gamma^{ik} &
0
\end{array}
\right),
\end{align}
\end{widetext}
\end{subequations}
where
\begin{align}
\nonumber
\mathcal A_{11}^{ij}{}_{kl}{}^{mn} &=
\gamma^{ij}\delta_{(k}^m\delta^{n}_{l)}
+\bar C_1\gamma^{i(m}\gamma^{n)j}\gamma_{kl}\\
\nonumber &\qquad
+\bar C_2\gamma^{ij}\gamma_{kl}\gamma^{mn}
+\bar C_3\delta^{i}_{(k}\delta_{l)}^{j}\gamma^{mn}\\
&\qquad
+\bar C_4\left(\delta^{i}_{(k}\delta_{l)}^{(m}\gamma^{n)j}
+ \delta^{j}_{(k}\delta_{l)}^{(m}\gamma^{n)i}\right).
\end{align}
The parameters in $\mathcal A_{11}^{ij}{}_{kl}{}^{mn}$  
are given by
\begin{subequations}
\begin{align}
2\bar C_1 &= 1-\lambda_+^2,&\quad
2\bar C_2 &= \lambda_+^2-1
+2\frac{\lambda_-^2-\lambda_+^2}{1+3\lambda_+^2}\lambda_+^2,\\
2\bar C_3 &=1-\lambda_+^2,&\quad
4\bar C_4 &=3(\lambda_+^2-1).
\end{align}
\end{subequations}
The remaining parameters in $A^{ij}{}_{kl}{}^{mn}$ are
\begin{subequations}
\begin{align}
\bar C_5  &= 2\frac{\lambda_+^2-\lambda_-^2}{1+3\lambda_+^2},&\quad
\bar C_6  &= \lambda_-^2,\\
4\bar C_7  &= 3\lambda_+^2+1,&\quad
4\bar C_8  &= \frac{\lambda_-^2+1}{3\lambda_-^2+1}(3\lambda_+^2+1).
\end{align}
\end{subequations}
The parameters of $\mathcal B^{i}{}_{kl}{}^{mn}$ are 
\begin{subequations}
\begin{align}
\bar B_1 &= 2\frac{\lambda_+^2-\lambda_-^2}
{1+3\lambda_+^2},&\quad
2\bar B_2 &= \frac{\lambda_+^2-\lambda_-^2}
{1+3\lambda_+^2},\\
\bar B_3 &=3\frac{\lambda_+^2-\lambda_-^2}
{1+3\lambda_+^2}.&&
\end{align}
\end{subequations}

\paragraph*{Characteristic variables:}
The characteristic variables may be constructed for each 
family $\mathcal{F}_i$. For brevity we present them only for the 
subset $C_5=C_7=0$ of the first family $\mathcal{F}_1$. The 
characteristic variables in the scalar sector are 
\begin{subequations}
\begin{align}
U_{S\alpha\pm \lambda_-} &= (\p_t\pm\lambda_-\p_s)\alpha\mp
2\lambda_-(\p_t\pm\lambda_-\p_s)\gamma_{ss}\nonumber\\
&\mp 4\lambda_-(\p_t\pm\lambda_-\p_s)\gamma_{qq},\\
U_{S\beta\pm\lambda_-}&= (\p_t\pm\lambda_-\p_s)\alpha\mp
2\lambda_-(\p_t\pm\lambda_-\p_s)\gamma_{ss}\nonumber\\
&\mp 4\lambda_-(\p_t\pm\lambda_-\p_s)\gamma_{qq},\\
2U_{S\beta\pm\lambda_+} &= 2(\p_t\pm\lambda_+\p_s)\gamma_{ss}
\mp\lambda_+(\p_t\pm\lambda_+\p_s)\beta_{s},\\
U_{S\pm\lambda_+} &= (\p_t\pm\lambda_+\p_s)
(\gamma_{ss}-\gamma_{qq}),
\end{align}
\end{subequations}
with speeds $\lambda_+$ and $\lambda_-$ as given 
above~\eqref{eq:Speeds_v+-}. The vector sector has 
\begin{subequations}
\begin{align}
U_{A\pm\gamma}&= (\p_t\pm\lambda_V\p_s)\gamma_{As},\\
U_{A\pm\beta}&= (\p_t\pm\lambda_V\p_s)\beta_{A}.
\end{align}
\end{subequations}
with $\lambda_V=\sqrt{1+C_2}$. Finally the tensor sector has characteristic 
variables 
\begin{align}
U_{AB\pm 1}^{TF}&= (\p_t\pm\p_s)\gamma_{AB}^{TF},
\label{eq:GR_Tensor_CharVar}
\end{align}
where $\gamma_{AB}^{TF}$ denotes the transverse trace free part of the 
metric against $s_i$. 

To find the characteristic variables of the system in an arbitrary
background, one must transform the eigenvectors of the principal 
symbol through the similarity transformation defined by $T$ 
from \eqref{eq:sim_matrix} and replace the time 
derivatives $\p_t\to\p_0$.

\paragraph*{Symmetric hyperbolicity:}
As for the Maxwell equations we impose strong hyperbolicity and check
where the rank criterion is satisfied. Interestingly we find again that
the rank of the relevant matrix $B(c)$ from section \ref{sec:rank_criterion}
is discontinuous (and drops) at every strongly hyperbolic formulation. Hence,
this does not lead to immediate restrictions in the parameter space.

A complete analysis of symmetric hyperbolicity seems to be too hard
in the general case, because expressions become very complicated.
We want to discuss special cases where this problem 
can be avoided and conjecture about the general case.

In App.~\ref{app:symmetrizer_constr} we construct positive definite 
symmetrizers for formulations whose parameters satisfy
\begin{align}
\label{eq:sym_hyp_form}
\nonumber
\lambda_V^2 &= (3\lambda_+^2+1)/4,&
C_4 &= (\lambda_-^2-1)/8,&
C_5 &= 0,\\
\nonumber
C_6 &= 3(\lambda_+^2-1)/8,&
C_7 &= 0,\\
\lambda_-^2 &> 0,&
\lambda_+^2 &> 0.
\end{align}
Hence, we found a two parameter family of symmetric hyperbolic
formulations. Using the same techniques we are also able to prove
symmetric hyperbolicity of other two parameter families. With
\eqref{eq:sym_hyp_form} they share that they belong to the
four parameter family $\mathcal F_1$
\eqref{eq:scalar_four_param} and that $C_5=0$.

For the three parameter subfamily of $\mathcal F_1$
that satisfies also $C_5=0$ we can construct candidates $H^{ij}$ such that the contractions
$s_i H^{ij} s_j$ are positive definite for every spatial vector $s$.
We think that the full matrices $H^{ij}$ can be chosen positive
definite in this case, but are not able to prove it.

We also deal with the question whether also in GR every strongly
hyperbolic formulation  is symmetric hyperbolic, as it was the case
for the toy problems in section~\ref{section:Maxwell_Pure_Gauge}.
We find that the (strongly hyperbolic)
two parameter subfamily family of $\mathcal F_1$ with
\begin{align}
\label{eq:non_sym_hyp_form}
\nonumber
 \lambda_V^2 &= \frac14(1 + C_5)(1 + 3\lambda_-^2),&
 \lambda_+^2 &= (1+C_5)\lambda_-^2,\\
\nonumber
 C_4 &= \frac18(\lambda_-^2-1),\\
\nonumber
 C_6 &= \frac18(C_5 (3\lambda_-^2+1) + 3(\lambda_-^2-1)),&
 C_7 &= 0,\\
\lambda_-^2&>0,& C_5&>-1
\end{align}
and a generic choice of $\lambda_-^2$ and $C_5$
is not symmetric hyperbolic.

To prove that we first search for candidates in the set of matrices
that can be written in terms of the metric. We find a six dimensional
space, $\mathcal G=\mathcal G_S\oplus\mathcal G_A$, which decomposes into
a four dimensional $\mathcal G_S$, the symmetric part $\bar H^{(ij)}$,
and a two dimensional $\mathcal G_A$, the antisymmetric $\phi^{[ij]}$
\eqref{eq:fluxes_props_text}. We can show that there is no
candidate with positive definite principal minor
$s_i\bar H^{ij}s_j=s_i\bar H^{(ij)}s_j$ (where $s$ is any spatial vector).
That means the four degrees of freedom in $\mathcal G_S$
are not sufficient to construct a positive symmetrizer for the principal
symbol. Thus, since all principal minors of a positive definite
matrix are positive as well, there is no symmetrizer which can be
written in terms of the metric.

We then ask for the candidate symmetrizers in the most general
set of matrices, i.e. the set which is only restricted by the
appropriate block structure \eqref{eq:partitioned_symmetrizer}.
There we find a 76 dimensional space of candidates,
$\tilde{\mathcal G}=\tilde{\mathcal G}_S\oplus\tilde{\mathcal G}_A$, but again
$\tilde{\mathcal G}_S$ is four dimensional. The most general
set of matrices of course contains all matrices that can be written
in terms of the metric, i.e. $\tilde{\mathcal G}_S = \mathcal G_S$.
Hence, $\tilde{\mathcal G}$ does not contain a matrix with positive
definite principal minor $s_i\bar H^{ij}s_j$, and therefore there is
no positive definite symmetrizer for the formulations
\eqref{eq:non_sym_hyp_form}.

This statement holds for a generic choice of the parameters in
\eqref{eq:non_sym_hyp_form}. To deal with special cases we apply
the rank criterion. It tells us that within
\eqref{eq:non_sym_hyp_form} one cannot find symmetrizers, except when 
$\mathcal G_S$ is at least five dimensional. This is the case
for $C_5=0$ only, which results in the (symmetric hyperbolic)
one parameter family $\mathcal F_3$.

Thus, the obvious conjecture is that formulations where the 
dimension of $\mathcal G_S$ is at least five are symmetric
hyperbolic, and that formulations where $\mathcal G_S$ is at most 
four dimensional are not.

To prove this one would need to analyze
positivity of candidates for three and four parameter families.
A problem that we could not solve so far.

We find that the dimension of $\mathcal G_S$ is at least five e.g. when 
one imposes $C_5=0$ in the family $\mathcal F_1$ \eqref{eq:scalar_four_param}, also for 
$C_7\neq 0$. But, as discussed above, for this three parameter family
we can only show that certain sub blocks of a candidate are positive definite.

\paragraph*{Discussion:} We have also constructed the characteristic 
variables for the four~$\mathcal{F}_1$ and three parameter~$\mathcal{F}_2$ 
families of strongly hyperbolic Hamiltonian formulations of GR. It is 
interesting that in every fully second order strongly hyperbolic 
Hamiltonian formulation (Maxwell, pure gauge and GR) we find that the 
fully second order characteristic variables always take the simple 
form 
\begin{align} 
U_i=\sum_{j}a_{ij}(\p_t\pm v\p_s)u_j,\label{eq:FSO_CharVarForm}
\end{align} 
with speeds $\pm v$ given constants $a_{ij}$ and primitive variables 
$q_i$. There are strongly hyperbolic fully second order systems without 
this property, for example the Z4 formulation coupled to the puncture 
gauge~\cite{Bernuzzi:2009ex}. At the moment it is not clear what causes 
the special form \eqref{eq:FSO_CharVarForm}.

We originally aimed to find strongly and symmetric hyperbolic Hamiltonian  
formulations with popular gauge conditions. Having analyzed a large class
of gauges, we are now in a position to return to that question. The 
generalized harmonic gauge is straightforward and as discussed previously 
is recovered with $C_i=0$, $i=1,\ldots,7$. More interesting is the puncture 
gauge~\eqref{eq:puncture_gauge}.

The formulation parameters in the strongly and symmetric hyperbolic formulations
can be used to adjust the characteristic speeds. If we regard them as scalar
functions of the position variables $(\gamma_{ij},\alpha,\beta^i)$ then the
equations of motion will contain derivatives of the parameters,
but in the principal part only the functions appear. Hence,
hyperbolicity of the formulations is not altered.

Here we want to present the evolution equations for lapse and shift for a
formulation that is very close to the puncture gauge. It belongs to the
symmetric hyperbolic family \eqref{eq:sym_hyp_form}. We choose
\begin{align}
\label{eq:sym_hyp_puncture}
\lambda_-^2 &= \mu_L,&
\lambda_+^2 &= \frac{4\mu_S\gamma^{1/3}}{3\alpha^2}-\frac13.
\end{align}
Thus, symmetric hyperbolicity requires
\begin{align}
\mu_L&>0, &
4 \gamma^{1/3}\mu_S &> \alpha^2,
\end{align}
which reduces to $0<\alpha<\sqrt{3}\gamma^{1/6}$ for the popular choice
$\mu_L=2/\alpha$, $\mu_S=3/4$. To make the comparison with other systems 
simpler we present the equations in terms of the \emph{Z4 variables}.
Instead of the canonical momenta $(\pi^{ij},\sigma,\rho_i)$
the Z4 formulation~\cite{Bona:2003fj} (which is not Hamiltonian) takes 
the extrinsic curvature and two fields called $(\Theta, Z_i)$ as evolved 
variables. The mapping to the canonical variables is defined through 
\eqref{eq:rel_pi_K} and
\begin{align}
\label{eq:trafo_Z4_canon}
\Theta &= \frac12 \frac{\alpha \sigma}{\sqrt{\gamma}},&
Z_i &= -\frac{\alpha \rho_i}{2\sqrt{\gamma}}.
\end{align}
In these variables, restricting to Brown's Hamiltonian results in a 
system that is as close to the Z4 formulation as possible. With the 
choice \eqref{eq:sym_hyp_puncture} we find
\begin{align}
\nonumber
\p_t\alpha &= \beta^i\p_i\alpha-\mu_L\alpha^2 K 
+\frac12\mu_L\alpha^2\Theta,\\
\nonumber
\p_t\beta^i &= \beta^j\p_j  \beta^i+
\mu_S\gamma^{1/3}\Gamma^i_{jk}\gamma^{jk}
+ \frac13(\mu_S\gamma^{1/3}-\alpha^2)\Gamma^k_{kj}\gamma^{ij}\\
&\qquad
+ 2\mu_S\gamma^{1/3}Z^i - \alpha D^i\alpha.
\end{align}
Hence, near the puncture (for $\alpha\rightarrow 0$)
we obtain the puncture gauge condition \eqref{eq:puncture_gauge}
when the constraints $\Theta=0$ and $Z_i=0$ are
satisfied.

Brown finds in \cite{Brown:2008cca} that a 
strongly hyperbolic Hamiltonian formulation with the puncture gauge 
does not exist. Here we have shown that with a slight modification of 
the $\Gamma$-driver shift condition even symmetric hyperbolicity can be
achieved. The new terms are small near the puncture.

The final measure of 
how useful the formulations with these gauges are will be 
determined by numerical experiments. Some of the ingredients for 
successful numerical evolutions are now understood. In this work we 
have taken care of questions related to the continuum formulation. 
For long-term stable numerical evolutions of puncture data a 
convenient choice of variables is an important feature as 
well~\cite{Witek:2010}. Furthermore, even if convenient 
variables can be taken on a suitably mathematically well-founded 
formulation, there is no guarantee that numerical simulations 
will be stable, either around flat-space or the puncture data 
we are interested in evolving. Numerical tests will be 
discussed in detail elsewhere.


\section{Conclusion}
\label{section:Conclusion}


Motivated partially by the success of symplectic integrators 
in many numerical applications we have studied Hamiltonian 
formulations of GR with live gauges. An essential property for 
any numerical application is that the underlying PDE problem 
is well-posed. Thus we have concentrated on well-posedness of 
the I(B)VP and considered the level of hyperbolicity of the 
equations of motion when the gauge choice is made
\emph{at the Hamiltonian level}. 

Several tools were developed to perform our analysis. We examined 
the relationship between the Hamiltonian energy and that required 
for symmetric hyperbolicity. In general we find that the 
Hamiltonian guarantees the existence of a candidate symmetrizer, 
which is typically not positive definite in applications. However 
there are important special cases in which Hamiltonian structure 
significantly simplifies hyperbolicity analysis. We have also 
translated the analysis in the literature on first order in 
time, second order in space systems to fully second order 
systems and now view the fully second order characteristic 
variables as the natural form whenever they can be constructed.

In our first applications we demonstrated that every strongly 
hyperbolic Hamiltonian formulation of electromagnetism is 
symmetric hyperbolic. We furthermore show that a generalization
of electromagnetism, with Hamiltonian structure, the pure gauge
system, is also always symmetric hyperbolic whenever it is 
strongly hyperbolic. 

There are several interesting consequences of the Hamiltonian 
approach to gauge choice. Normally from the free-evolution 
PDEs point of view the gauge choice of GR (or electromagnetism)  
allows one to choose equations of motion for some quantities 
and the freedom to add combinations of the constraints 
to the equations of motion. But in the Hamiltonian approach in 
GR for example the constraint addition is determined completely 
by the choice of the lapse and shift. This restricts the choice 
of formulation. There are choices of evolution equations for 
the lapse and shift that allow for a strongly or symmetric 
hyperbolic system without Hamiltonian structure that are forbidden 
when Hamiltonian structure is imposed, including the popular 
puncture gauge. We find in every example that the fully second 
order characteristic variables of the Hamiltonian systems always 
have a special form. We do not know how generally this property 
holds, but there are certainly examples of strongly hyperbolic 
formulations without Hamiltonian structure that do not exhibit 
it.

In our analysis of GR we find several families of strongly 
hyperbolic formulations and are able to choose the gauge 
parameters so that we get very close to the puncture 
gauge choice. Analysis of symmetric hyperbolicity is rather 
more difficult, but for every strongly hyperbolic formulation 
we are at least able to obtain partial results. We analyze 
several cases completely and find a Hamiltonian formulation 
that is symmetric hyperbolic with a small modification of the 
puncture gauge.

The next practical step for applications is to investigate 
whether or not any of the formulations constructed here may be 
used successfully in numerical evolutions, either with or 
without symplectic integration. There are however delicate 
issues involved in the choice of evolved variables.

An interesting question about GR is; what are the set of gauge 
choices that admit a well-posed I(B)VP? This question may be 
asked whether or not one restricts to formulations with 
Hamiltonian structure, and has still not been answered to our 
satisfaction in the general case. 


\acknowledgments


It is a pleasure to thank Carsten Gundlach, Jos\'e-Mar\'ia 
Mart\'in-Garc\'ia and Andrea Nerozzi for valuable discussions and 
comments on the manuscript. We also wish to thank Jos\'e-Mar\'ia 
Mart\'in-Garc\'ia for making several of his mathematica scripts 
available to us and Jerrold Marsden for his hints about previous
work. This work was supported in part by DFG grant 
SFB/Transregio~7 ``Gravitational Wave Astronomy''.


\appendix



\section{Hyperbolicity of fully 
second order systems}
\label{App:Hyperbolicity}


In this section we discuss the various
definitions of hyperbolicity for fully second order 
systems as summarized in Sect.~\ref{section:Definitions}.

The key in the proofs is the following. If two matrices
\begin{align}
\label{eq:fully2nd_from_FOITSOIS}
{\mathcal A}^{ij} &=  M^{-1}[V^{ij}-G^{i} M F^{j}],\nonumber\\
\mathcal B^i &= [F^iM^{-1}+M^{-1}G^i]M
\end{align}
are given then the matrix
\begin{align}
\mathfrak A^p{}_i{}^j &= \left(
\begin{array}{cc}
0 & \delta^p_i\\
{\mathcal A}^{pj} & \mathcal B^p
\end{array}
\right)
\end{align}
is similar to the matrix $A^p{}_i{}^j$:
\begin{align}
\label{eq:trafo_fully_2nd_FOITSOIS}
\mathfrak A^p{}_i{}^j &= \mathcal T^{-1}{}_i{}^k A^p{}_k{}^l \mathcal T{}_l{}^j,
\end{align}
where
\begin{align}
\label{eq:app_A_PP_mat}
A^p{}_i{}^j &=
\left(
\begin{array}{cc}
F^j\delta^p_i & M^{-1}\delta^p_i\\
V^{pj} & G^p
\end{array}
\right),\\
\mathcal T^{-1}{}_i{}^k &=
\left(
\begin{array}{cc}
\delta^k_i & 0\\
F^k & M^{-1}
\end{array}
\right),\\
\mathcal T_l{}^j &=
\left(
\begin{array}{cc}
\delta^j_l & 0\\
-MF^j & M
\end{array}
\right).
\end{align}

This shows immediately that a fully second order system is
strongly hyperbolic if and only if the fully second order
principal symbol \eqref{eq:fully_2nd_P_symbol} has a complete set
of eigenvectors with real eigenvalues.

Moreover one may easily prove that the notions of strong and symmetric
hyperbolicity for a first order in time, second order in space
system and an equivalent fully second order system agree.

Now we show that symmetric hyperbolicity of the fully second order
system is equivalent to the existence of a positive conserved quantity.

\begin{proof}
\emph{Fully second order system is symmetric hyperbolic $\Rightarrow$
a positive conserved quantity exists:}
Without loss of generality we assume the following reduction to first order
\begin{align}
\p_tq &= M^{-1}w + F^{i}\p_iq+S_q,\label{eq:App_FOITSOIS_1}\\
\nonumber
\p_tw &= V^{ij}\p_i\p_jq+G^i\p_iw+S_w,
\end{align}
which results in a first order in time, second order in space system
with principal part matrix \eqref{eq:app_A_PP_mat}.
We notice that in the principal part one can write
\begin{align}
\nonumber
\p_t\left(
\begin{array}{c}
\p_i q\\
\p_t q
\end{array}
\right)
&=
\mathcal T^{-1}{}_i{}^j\p_t\left(
\begin{array}{c}
\p_j q\\
w
\end{array}
\right)
=
\mathcal T^{-1}{}_i{}^jA^p{}_j{}^k\p_p\left(
\begin{array}{c}
\p_j q\\
w
\end{array}
\right)\\
&=
\mathfrak{A}^p{}_i{}^l
\p_p
\left(
\begin{array}{c}
\p_l q\\
\p_t q
\end{array}
\right).
\end{align}
It is then straightforward to check that if $H^{ij}$ is
a symmetrizer of the first order in time second order in space
system then $\mathcal T^\dag{}^i{}_k H^{kl} \mathcal T_l{}^j$ defines
a positive conserved quantity for the fully second order system.
Positivity follows from Sylvester's law of inertia. To show
conservation of the corresponding energy we notice that
\begin{align}
\label{eq:conserv_prop_2nd}
\nonumber
S_i\mathcal T^\dag{}^i{}_k &H^{kl}  \mathcal T_l{}^j \mathfrak A^p{}_j{}^m s_p S_m
=S_i\mathcal T^\dag{}^i{}_k H^{kl} A^p{}_l{}^j s_p \mathcal T_j{}^m S_m\\
\nonumber
&=\mathcal T^\dag{}^s{}_s S_k H^{kl} A^p{}_l{}^j s_p S_j\mathcal T_s{}^s\\
\nonumber
&=\mathcal T^\dag{}^s{}_s (S_k H^{kl} A^p{}_l{}^j s_p S_j)^\dag\mathcal T_s{}^s\\
\nonumber
&=(S_i \mathcal T^\dag{}^i{}_k H^{kl} A^p{}_l{}^j s_p \mathcal T_j{}^m S_m)^\dag\\
&=(S_i\mathcal T^\dag{}^i{}_k H^{kl} \mathcal T_l{}^j\mathfrak A^p{}_j{}^m s_p S_m)^\dag,
\end{align}
where $s_i$ is an arbitrary spatial unit vector and
\begin{align}
\label{eq:def_Si}
S_i &= \left(
\begin{array}{cc}
s_i & 0\\
0 & 1
\end{array}
\right).
\end{align}
Using the techniques presented in \cite{Gundlach:2005ta}
one can then show that \eqref{eq:conserv_prop_2nd} implies
conservation of the energy
\begin{align}
\epsilon &= \frac12
\left(\begin{array}{cc}
\p_jq^\dag & \p_tq^\dag
\end{array}\right)
\mathcal T^\dag{}^i{}_k H^{kl} \mathcal T_l{}^j
\left(\begin{array}{c}
\p_iq \\ \p_tq
\end{array}\right).
\end{align}

{\it A positive conserved quantity exists
$\Rightarrow$ Fully second order system is symmetric hyperbolic:}
This direction is obvious if one chooses the natural first order
reduction with $\p_t q=w$. The symmetrizer of the
first order in time second order in space system is then the
same as the symmetrizer of the fully second order system.
\end{proof}

\newcounter{dummy}
\setcounter{dummy}{\thelem}
\setcounter{lem}{\therepeatedLemma}
\begin{lem}
\label{lem:sym_hym_2nd}
Symmetric hyperbolicity of the fully second order 
system~\eqref{eq:evolution_second} is equivalent to the existence 
of {\it a second order symmetrizer and fluxes $(H_1,\phi^i,\phi^{ij}$)} satisfying
\begin{align}
\label{eq:fluxes_props}
\phi^{\dag i}&=\phi^{i},&\phi^{ij}&=\phi^{[ij]}=\phi^{ji\,\dag},
\end{align}
\begin{align}
\nonumber
s_is_js_k(\phi^i - \mathcal{B}^{\dag i}H_1)&\mathcal{A}^{jk}=\\
&=\mathcal{A}^{\dag jk}(\phi^i-H_1\mathcal{B}^{i})s_is_js_k,
\label{eq:2nd_Cons_first}
\end{align}
for every spatial vector $s_i$,  and
\begin{align}
\left(\begin{array}{cc}
H_1\mathcal{A}^{ij}+\mathcal{B}^{(i\,\dag}H_1\mathcal{B}^{j)}-
\mathcal{B}^{(i\,\dag}\phi^{j)} + \phi^{ij\,\dag} & \phi^i - \mathcal{B}^{\dag i}H_1\\  
\phi^j - H_1\mathcal{B}^{j} & H_1
\end{array}\right)\label{eq:2nd_Positivity}
\end{align}
hermitian positive definite.
\end{lem}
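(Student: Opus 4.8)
The plan is to reduce the statement to the characterization of conserved energies that is already available in this appendix. By the equivalence proven above, symmetric hyperbolicity of \eqref{eq:evolution_second} is the same as the existence of a hermitian positive definite $H_2^{ij}$ for which the energy density $\epsilon_2=u_i^\dagger H_2^{ij}u_j$, with $u_i^\dagger=(\partial_i q^\dagger,\partial_t q^\dagger)$, is conserved for the principal part linearized about a constant background; equivalently $H_2^{ij}$ is a positive symmetrizer of the natural first order reduction $\partial_t q=\Pi$, $\partial_t\Pi=\mathcal A^{ij}\partial_i\partial_j q+\mathcal B^i\partial_i\Pi$. Writing $H_2^{ij}$ in $2\times2$ block form with $(2,2)$-block $H_1=H_1^\dagger$, off-diagonal blocks $H_{12}^i$ and $H_{12}^{j\dagger}$, and $(1,1)$-block $H_{11}^{ij}$ satisfying $H_{11}^{ij}=H_{11}^{ji\dagger}$ (forced by reality of $\epsilon_2$), it remains only to translate ``$\int\epsilon_2$ conserved'' into conditions on the blocks.

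The core step is this conservation computation. I would differentiate $\epsilon_2$ in time, substitute the reduced equations of motion, and integrate by parts until $\partial_t\epsilon_2$ is brought to the form $\partial_k F^k$ plus a remainder that must vanish identically; this is the bookkeeping of \cite{Gundlach:2005ta} specialized to $A_1^i=0$, $A_2=\mathbf 1$, $B_1^{ij}=\mathcal A^{ij}$, $B_2^i=\mathcal B^i$. I expect the outcome to be: the purely $\Pi^\dagger(\cdots)^k\Pi$ part of the flux forces a flux matrix $\phi^k$ that must be hermitian; the $\Pi$--$\partial q$ cross terms fix the off-diagonal block to $H_{12}^i=\phi^i-\mathcal B^{\dagger i}H_1$ and introduce a tensor $\phi^{ij}$, antisymmetric in $ij$ and anti-hermitian as a matrix, through which the $(1,1)$-block is forced to be $H_{11}^{ij}=H_1\mathcal A^{ij}+\mathcal B^{(i\dagger}H_1\mathcal B^{j)}-\mathcal B^{(i\dagger}\phi^{j)}+\phi^{ij\dagger}$, while $H_1$ is otherwise free. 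The one term that does not automatically reorganize into a divergence is the piece cubic in spatial derivatives, $\partial_iq^\dagger H_{12}^i\mathcal A^{jk}\partial_j\partial_kq$ together with its conjugate; this is a divergence precisely when the totally $ijk$-symmetric part of $H_{12}^i\mathcal A^{jk}$ is hermitian, which, using $H_{12}^i=\phi^i-\mathcal B^{\dagger i}H_1$ and the hermiticity of $\phi^i$ and $H_1$, is exactly \eqref{eq:2nd_Cons_first}. Assembling the blocks then reproduces the matrix \eqref{eq:2nd_Positivity}, whose positive definiteness is the remaining requirement, and the symmetry properties collected for $\phi^i$ and $\phi^{ij}$ are \eqref{eq:fluxes_props}. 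For the converse I would set $H_2^{ij}$ equal to \eqref{eq:2nd_Positivity}: positivity is immediate, and running the same computation backwards -- the reorganization into a divergence being possible precisely because of \eqref{eq:2nd_Cons_first} and the symmetry of the fluxes -- shows $\int\epsilon_2$ is conserved, so that $H_2^{ij}$ is a positive symmetrizer of the natural reduction and \eqref{eq:evolution_second} is symmetric hyperbolic.

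The main obstacle is the conservation computation itself: organizing terms of several different derivative orders, carrying out the integrations by parts, and tracking the index (anti)symmetrizations so that the clean reparametrization in terms of $(H_1,\phi^i,\phi^{ij})$ emerges with \eqref{eq:2nd_Cons_first} as the only surviving obstruction. Once that bookkeeping is in place, the positivity condition \eqref{eq:2nd_Positivity} and the symmetry relations \eqref{eq:fluxes_props} follow by inspection, and the converse is essentially the same calculation read in reverse.
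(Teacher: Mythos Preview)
Your proposal is correct and follows essentially the same route as the paper: reduce to the existence of a conserved positive energy for the natural reduction $\partial_t q=\Pi$, $\partial_t\Pi=\mathcal A^{ij}\partial_i\partial_j q+\mathcal B^i\partial_i\Pi$, differentiate $\epsilon_2$, and organize $\partial_t\epsilon_2$ into a spatial divergence, with the totally symmetric cubic piece $\partial_iq^\dagger H_{12}^i\mathcal A^{jk}\partial_j\partial_kq$ producing \eqref{eq:2nd_Cons_first} as the sole obstruction and the remaining matching conditions fixing the blocks to \eqref{eq:2nd_Positivity}. The only presentational difference is that the paper, for the direction ``conserved $\Rightarrow$ conditions'', explicitly parametrizes the flux $\Phi^i$ in $2\times2$ block form $(\phi_3^{ijk},\phi_2^{ij},\phi_1^i)$ and compares $\partial_t\epsilon$ with $\partial_i\Phi^i$ to obtain four matching equations, which it then solves for $H_2^i$, $H_3^{ij}$ and identifies $\phi^i=2\phi_1^i$, $\phi^{ij}=2\phi_2^{[ij]}+2\phi_1^{[i}\mathcal B^{j]}-\mathcal B^{[i}H_1\mathcal B^{j]}$; the existence of a suitable $\phi_3^{ijk}$ is handled by invoking the lemma in \cite[Sect.~IVB]{Gundlach:2005ta}, which yields \eqref{eq:2nd_Cons_first}.
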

\setcounter{lem}{\thedummy}

\begin{proof}
We proceed by demonstrating equivalence
of \eqref{eq:fluxes_props}-\eqref{eq:2nd_Positivity} and the existence of 
a conserved positive quantity on the system.

{\it \eqref{eq:fluxes_props}-\eqref{eq:2nd_Positivity} $\Rightarrow$ conserved quantity exists:}
Consider the energy
\begin{align}
\label{eq:App_second_conserved}
E &= \int d^3 x\,\epsilon,\\
\nonumber
\epsilon&=\frac12
\left(\begin{array}{cc}
\p_iq^\dag & \p_t q^\dag
\end{array}\right)
H^{ij}
\left(\begin{array}{c}
\p_jq \\ \p_t q
\end{array}\right),
\end{align}
where $H^{ij}$ is the matrix~\eqref{eq:2nd_Positivity}.
The positivity of $H^{ij}$ obviously implies positivity of $E$. It remains to show that
$E$ is conserved.

Computing a time derivative of the energy density  $\epsilon$ we get
\begin{align}
2\p_t\epsilon &= \p_p\left(\p_tq^\dag\phi^p\p_tq\right)\\
\nonumber
&\qquad
+\p_i\left(q^\dag H_3^{[pi]\,\dag}\p_p\p_tq + \p_t\p_p q^\dag H_3^{[pi]}q\right)\\
\nonumber &\qquad
+\p_p\left(\p_iq^\dag\mathcal A^{ip\,\dag}H_1\p_tq + \p_tq^\dag H_1\mathcal A^{ip}\p_iq\right)\\
\nonumber &\qquad
+\p_i\p_p q^\dag\mathcal A^{ip\,\dag}H_2^{k\,\dag}\p_kq
+\p_i q^\dag H_2^i\mathcal A^{pk}\p_k\p_pq,
\end{align}
where $H_2^{i} = \phi^i-H_1\mathcal B^i$,
$H_3^{[pi]} = \phi^{pi\,\dag} + \mathcal B^{[p\,\dag}\phi^{i]}-
\mathcal B^{[p\,\dag}H_1\mathcal B^{i]}$.
The last two terms can be written as a divergence, too:
\begin{align}
\nonumber
&\p_i\p_p q^\dag\mathcal A^{ip\,\dag}H_2^{k\,\dag}\p_kq
+\p_i q^\dag H_2^i\mathcal A^{pk}\p_k\p_pq\\
\nonumber
&=
\p_i\p_p q^\dag\mathcal A^{(ip\,\dag}H_2^{k)\,\dag}\p_kq
+\p_i q^\dag H_2^{(i}{{\mathcal A}}^{pk)}\p_k\p_pq\\
\nonumber
&\qquad
+\frac23\p_i\p_p q^\dag(\mathcal A^{ip\,\dag}H_2^{k\,\dag}-\mathcal A^{k(p\,\dag}H_2^{i)\,\dag})\p_kq\\
\nonumber
&\qquad
+\frac23\p_i q^\dag (H_2^{i}\mathcal A^{pk}-H_2^{(k} {\mathcal A^{p)i}})\p_k\p_pq
\\
\nonumber
&=
\p_p\left(\p_i q^\dag H_2^{(i}{\mathcal A^{pk)}}\p_kq\right)\\
\nonumber
&\qquad
+\frac23\p_i\left(\p_p q^\dag(\mathcal A^{ip\,\dag}H_2^{k\,\dag}-\mathcal A^{kp\,\dag}H_2^{i\,\dag})\p_kq\right)\\
&\qquad
+\frac23\p_k\left(\p_i q^\dag (H_2^{i}\mathcal A^{pk}-H_2^{(k} {\mathcal A}^{p)i})\p_pq\right),
\end{align}
where the last equality holds because of \eqref{eq:2nd_Cons_first}
(for a tensor $T^{ijk}$ the equation $T^{(ijk)}=0$ is equivalent to the requirement
$T^{ijk}s_is_js_k=0$ for every covector $s_i$).
Hence $E$ is a conserved quantity.

{\it conserved quantity exists $\Rightarrow$ \eqref{eq:fluxes_props}-\eqref{eq:2nd_Positivity}:}
Assume that we have
the conserved quantity~\eqref{eq:App_second_conserved} with
\begin{align}
\label{eq:symmetrizer_block_form}
H^{ij} =
\left(
\begin{array}{cc}
H_3^{ij} & H_2{}^{i}\\
H_2{}^{i\,\dag} & H_1
\end{array}
\right),
\end{align}
and parametrize 
the fluxes of the energy by
\begin{align}
\Phi^i=\left(\begin{array}{cc}
\phi_3{}^{ijk} & \phi_2{}^{ij} \\  \phi_2{}^{\dagger ik} & 
\phi_1{}^i
\end{array}\right).
\end{align}
with $\phi_1^{\dagger i}=\phi_1^{i}$ and $\phi_3{}^{ijk\,\dag}=
\phi_3{}^{ikj}$. Computing and comparing $\p_t\epsilon$
and  $\p_i\Phi^i$ reveals that energy conservation implies
\begin{subequations}
\label{eq:energy_conv}
\begin{align}
\label{eq:energy_conv_1}
2\phi_3^{(jk)i} = \mathcal{A}^{jk\,\dag} H_2{}^{i\,\dag},\\
\label{eq:energy_conv_2}
2\phi_2^{(ij)}=\mathcal{A}^{ij\,\dag}H_1,\\
\label{eq:energy_conv_3}
2\phi_2^{ij} = H_3{}^{ji} + H_2{}^j\mathcal{B}^i,\\
\label{eq:energy_conv_4}
2\phi_1^i = \mathcal{B}^{i\,\dag}H_1+H_2{}^{i}
\end{align}
\end{subequations}
The solution of \eqref{eq:energy_conv_4} is obvious:
\begin{align}
H_2{}^i=2\phi_1^i-\mathcal B^{i\,\dag}H_1.
\end{align}
From \eqref{eq:energy_conv_2} and \eqref{eq:energy_conv_3} we get
\begin{align}
\nonumber
H_3^{ij} &= 2\phi_2^{ji} - H_2{}^i\mathcal B^j\\
\nonumber
&= \mathcal A^{ij\,\dag}H_1 - 2\phi_2^{[ij]} - H_2{}^i\mathcal B^j\\
&= \mathcal A^{ij\,\dag}H_1 - 2\phi_2^{[ij]} - 2\phi_1^i\mathcal B^j + \mathcal B^{i\,\dag} H_1\mathcal B^j.
\end{align}
Since $H^{ij}$ defines a symmetrizer we also have $H_3^{ij\,\dag} = H_3^{ji}$.
When we identify
\begin{align}
\phi^i&=2\phi_1^i,\\
\nonumber
\phi^{ij}&=2\phi_2^{[ij]}+2\phi_1^{[i}\mathcal B^{j]}-\mathcal B^{[i}H_1\mathcal B^{j]}
\end{align}
this implies that \eqref{eq:2nd_Positivity} is hermitian,
and \eqref{eq:fluxes_props} is obvious.

Concerning \eqref{eq:energy_conv_1} it was proven in \cite[section IVB]{Gundlach:2005ta}
that this equation together with $\phi^{ijk\,\dag} = \phi^{ikj}$
has a solution if and only if
\begin{align}
\mathcal{A}^{(jk\,\dag} H_2{}^{i)\,\dag} = H_2{}^{(i}\mathcal{A}^{jk)}.
\end{align}
This equation is equivalent to \eqref{eq:2nd_Cons_first}.

Finally, the positivity of the symmetrizer $H^{ij}$ implies that the
matrix \eqref{eq:2nd_Positivity} is positive definite.
\end{proof}


\section{Proof of the rank criterion}
\label{app:rank_criterion}


In section \ref{sec:rank_criterion} we described the rank criterion,
a feature that can be used as a necessary condition for symmetric
hyperbolicity. Here we discuss the proof of this criterion.

We follow the notation of section \ref{sec:rank_criterion}. Thus,
$\mathcal C^l\subset\mathbb R^l$ is a set of formulation parameters, $c$,
and we consider $k\times k$ principal part matrices $A^p{}_i{}^j(c)$
that depend continuously on $c$. We search for symmetrizers of
$A^p{}_i{}^j(c)$ in the set of hermitian $k\times k$-matrices, $\mathcal G$,
that is the image of a linear map
$G^{ij}:\mathbb R^n\rightarrow\mathbb R^{k\times k}$, $g\mapsto G^{ij}(g)$.
The requirement that $G^{ij}(g)$ is a candidate symmetrizer for the
formulation $c$ defines linear equations on $g$ of the form
\begin{align}
B(c)g = 0,
\end{align}
where $B(c)\in\mathbb R^{m\times n}$ depends continuously on $c$.

We prove the following
\begin{lem}
Let $\bar{\mathcal C}^l\subset \mathcal C^l$ be the set of formulations
where the rank of $B(\bar c)$ is $N$ and that do not possess symmetrizers
in $\mathcal G$.
Let further $c$ be a formulation such that the intersection of every
neighborhood $U_c\subset\mathcal C^l$ of $c$ with $\bar{\mathcal C}^l$ is
not empty, $U_c\cap\bar{\mathcal C}^l\neq\emptyset$.

If there is a symmetrizer for $c$ in $\mathcal G$ then the rank of
$B(c)$ is smaller than $N$.
\end{lem}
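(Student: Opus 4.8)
The plan is to argue by contradiction, exploiting lower semicontinuity of the rank of a continuous matrix-valued map together with the openness of positive definiteness. Suppose $c$ admits a symmetrizer $G^{ij}(g_0)\in\mathcal G$, so that $B(c)g_0=0$ and $G^{ij}(g_0)$ is hermitian positive definite. First I would observe that the hypothesis already forces $\mathrm{rank}(B(c))\le N$: if $r=\mathrm{rank}(B(c))$, pick an $r$-dimensional subspace $W'\subset\mathbb R^n$ on which $B(c)$ acts injectively; injectivity is an open condition, so $B(\bar c)|_{W'}$ is injective for all $\bar c$ in some neighborhood $U_c$ of $c$, whence $\mathrm{rank}(B(\bar c))\ge r$ on $U_c$. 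Since $U_c$ meets $\bar{\mathcal C}^l$, where the rank equals $N$, we get $r\le N$. If $r<N$ we are done, so it remains to rule out the case $r=N$.

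Assume then $\mathrm{rank}(B(c))=N$. Choose an $N$-dimensional subspace $W\subset\mathbb R^n$ on which $B(c)$ acts injectively. The smallest singular value of $B(\bar c)|_W$ depends continuously on $\bar c$ and is strictly positive at $c$, hence is bounded below by some $\delta>0$ throughout a neighborhood $U_c$ of $c$. By hypothesis there is a sequence $\bar c_k\to c$ with $\bar c_k\in\bar{\mathcal C}^l$; for $k$ large enough $\bar c_k\in U_c$, and since $\mathrm{rank}(B(\bar c_k))=N$ we have $\mathbb R^n=W\oplus\ker B(\bar c_k)$. Decompose $g_0=w_k+g_k$ with $w_k\in W$ and $g_k\in\ker B(\bar c_k)$; applying $B(\bar c_k)$ gives $B(\bar c_k)w_k=B(\bar c_k)g_0$, and because $B(\bar c_k)g_0\to B(c)g_0=0$ the lower bound yields $\|w_k\|\le\delta^{-1}\|B(\bar c_k)g_0\|\to0$, so $g_k=g_0-w_k\to g_0$.

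Since $G^{ij}$ is linear, and therefore continuous, $G^{ij}(g_k)\to G^{ij}(g_0)$, which is hermitian positive definite; positive definiteness being an open condition, $G^{ij}(g_k)$ is positive definite for all sufficiently large $k$. But $g_k\in\ker B(\bar c_k)$, so by the characterization of candidate symmetrizers recalled in Sect.~\ref{sec:rank_criterion} the matrix $G^{ij}(g_k)$ is a candidate symmetrizer for $\bar c_k$, and hence a genuine symmetrizer of $\bar c_k$ lying in $\mathcal G$. This contradicts $\bar c_k\in\bar{\mathcal C}^l$. Therefore $\mathrm{rank}(B(c))=N$ is impossible, and $\mathrm{rank}(B(c))<N$.

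The step I expect to require the most care is showing $g_k\to g_0$, i.e.\ that the component of $g_0$ along $\ker B(\bar c_k)$ in the splitting $\mathbb R^n=W\oplus\ker B(\bar c_k)$ varies continuously with $\bar c_k$. This is where local constancy of the rank, $\mathrm{rank}(B(c))=\mathrm{rank}(B(\bar c_k))=N$, is essential, via the uniform lower bound $\delta$ on the singular values of $B(\bar c)|_W$ near $c$; the remaining ingredients — lower semicontinuity of the rank, used for the reduction $\mathrm{rank}(B(c))\le N$, and openness of positive definiteness — are standard.
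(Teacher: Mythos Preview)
Your proof is correct. Both your argument and the paper's rest on the same core mechanism --- approximating a candidate symmetrizer for $c$ by candidates for nearby $\bar c\in\bar{\mathcal C}^l$ and invoking openness of positive definiteness --- but the organization differs. The paper first proves an auxiliary proposition bounding the distance from $\ker B$ to $\ker\tilde B$ in terms of $\|B-\tilde B\|$ and the smallest nonzero singular value $K(\tilde B)$; it then argues indirectly that $K(B(\bar c_n))\to0$ along a sequence $\bar c_n\to c$, extracts a limit vector from $(\ker B(\bar c_n))^\perp$ that lands in $\ker B(c)$, and concludes that $\dim\ker B(c)$ strictly exceeds $n-N$. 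Your route is more direct: you first dispose of $\mathrm{rank}(B(c))>N$ by lower semicontinuity, then under the assumption $\mathrm{rank}(B(c))=N$ you use the explicit splitting $\mathbb R^n=W\oplus\ker B(\bar c_k)$ (valid precisely because the ranks coincide) to project $g_0$ onto $\ker B(\bar c_k)$ and show the projection converges back to $g_0$. This sidesteps the paper's final step of manufacturing an extra kernel vector and the attendant compactness argument, at the cost of needing the preliminary reduction to $\mathrm{rank}(B(c))\le N$. Your identification of the convergence $g_k\to g_0$ as the crux is apt; the uniform lower singular-value bound on $B(\bar c)|_W$ is exactly what makes it work, and is the analogue of the paper's $K(\tilde B)$.
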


For the proof we need the following
\begin{prop}
Let $B, \tilde B\in\mathbb{R}^{m\times n}$ with $\|B-\tilde B\|<\varepsilon$
and $g\in ker B$.
Let further be
\begin{align}
K(\tilde B)=\inf_{v\in(ker \tilde B)^\bot\setminus\{0\}}\frac{\|\tilde B v\|}{\|v\|},
\end{align}
where $V^\bot:=\{v\in\mathbb{R}^n:\langle v,\bar v\rangle=0,\;\forall\bar v\in V\}$.

Then we get
\begin{align}
\inf_{h\in ker\tilde B}\|g-h\|=\min_{h\in ker\tilde B}\|g-h\|<\frac{\varepsilon}{K\big(\tilde B\big)}\|g\|.
\end{align}
\end{prop}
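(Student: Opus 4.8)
The plan is to reduce the whole statement to the orthogonal decomposition of $g$ relative to $\ker\tilde B$. Since $\ker\tilde B$ is a linear subspace of the finite-dimensional space $\mathbb R^n$, it is closed, so the orthogonal projection of $g$ onto it exists; write $g=h_0+g_\perp$ with $h_0\in\ker\tilde B$ and $g_\perp\in(\ker\tilde B)^\perp$. The elementary property of orthogonal projections gives at once that the infimum $\inf_{h\in\ker\tilde B}\|g-h\|$ is attained (at $h_0$) and equals $\|g_\perp\|$, which disposes of the ``$\inf=\min$'' part of the claim. It then only remains to bound $\|g_\perp\|$ by $\tfrac{\varepsilon}{K(\tilde B)}\|g\|$.

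The key observation is that $\tilde Bg_\perp=\tilde Bg$, because $\tilde B(g-g_\perp)=\tilde Bh_0=0$. Next, since $g\in\ker B$ by hypothesis, one may write $\tilde Bg=\tilde Bg-Bg=(\tilde B-B)g$, so that, with $\|\cdot\|$ the operator norm on matrices compatible with the Euclidean norm on vectors,
\begin{align}
\|\tilde Bg_\perp\|=\|(\tilde B-B)g\|\le\|\tilde B-B\|\,\|g\|<\varepsilon\|g\|.
\end{align}
On the other hand $g_\perp\in(\ker\tilde B)^\perp$, so by the very definition of $K(\tilde B)$ as an infimum over that subspace one has $\|\tilde Bg_\perp\|\ge K(\tilde B)\,\|g_\perp\|$ (note that $K(\tilde B)$ is just the smallest nonzero singular value of $\tilde B$, hence strictly positive as long as $\tilde B\neq0$). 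Combining the two estimates gives $K(\tilde B)\|g_\perp\|<\varepsilon\|g\|$, i.e.\ $\|g_\perp\|<\tfrac{\varepsilon}{K(\tilde B)}\|g\|$, which is the assertion.

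The argument is essentially bookkeeping, and I do not expect a genuine obstacle; the only points needing a little care are the implicit non-degeneracy hypotheses. The strict inequality is only correct when $g\neq0$ (for $g=0$ both sides vanish and one should read ``$\le$''), and $K(\tilde B)$ must be positive, which holds automatically whenever $\tilde B\neq0$ since $(\ker\tilde B)^\perp\cap\ker\tilde B=\{0\}$ in finite dimensions makes the infimum defining $K$ a positive minimum. If $g_\perp=0$ the left-hand side is $0$ and the bound is trivial (using $g\neq0$); if $g_\perp\neq0$ the displayed chain of inequalities applies verbatim. The mildest point of friction is therefore just stating these hypotheses cleanly and fixing which matrix norm is in force so that the submultiplicative estimate $\|(\tilde B-B)g\|\le\|\tilde B-B\|\,\|g\|$ is legitimate.
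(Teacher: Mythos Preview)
Your argument is correct and is essentially the same as the paper's: both use the orthogonal decomposition of $g$ relative to $\ker\tilde B$ (your $g_\perp$ is the paper's $g-\tilde g$), then chain $K(\tilde B)\|g_\perp\|\le\|\tilde Bg_\perp\|=\|\tilde Bg\|=\|(\tilde B-B)g\|\le\|\tilde B-B\|\,\|g\|<\varepsilon\|g\|$. Your extra remarks on the edge cases $g=0$, $g_\perp=0$, and the positivity of $K(\tilde B)$ are more explicit than the paper's treatment but do not change the substance.
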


\begin{proof}
It is clear that there exists a $\tilde g\in ker\tilde B$ such that
$\inf_{h\in ker\tilde B}\|g-h\|=\|g-\tilde g\|$. It satisfies
$g-\tilde g\in(ker \tilde B)^\bot$. If $g=\tilde g$ nothing needs to
be shown. Hence, we assume that $g-\tilde g\in(ker \tilde B)^\bot\setminus\{0\}$.
The definition of $K$ then implies
\begin{align}
\nonumber
K(\tilde B)\|g-\tilde g\|&\leq\|\tilde B(g-\tilde g)\|=\|\tilde Bg\|
=\|(\tilde B-B)g\|\\
&\leq\|\tilde B-B\|\,\|g\|
<\varepsilon\|g\|.
\end{align}
\end{proof}

Now we prove the above lemma.
Since $B:\mathcal C^l\rightarrow\mathbb{R}^{m\times n}$ is continuous,
there exists for every $\varepsilon>0$ a $\delta>0$ such that
$\|B(c)-B(\tilde c)\|<\varepsilon$ for every $\tilde c$ with $\|c-\tilde c\|<\delta$.
We assume that $g\in\mathbb{R}^n$ satisfies $B(c)g=0$. Then, according to the proposition
there is for every $\tilde c$ with $\|c-\tilde c\|<\delta$ a
$\tilde g\in ker(B(\tilde c))$ that satisfies
\begin{align}
\|g-\tilde g\|<\frac{\varepsilon}{K(B(\tilde c))}\|g\|.
\end{align}

Since $U_c\cap\bar {\mathcal C}^l\neq\emptyset$ for every neighborhood
$U_c$ of $c$ we can choose $\bar c\in\bar{\mathcal C}^l$ such that
$\|c-\bar c\|<\delta$. Then we get for every $\bar g\in ker B(\bar c)$ that
$G^{ij}(\bar g)$ is not positive definite (because otherwise there is a symmetrizer for
$\bar c\in\bar{\mathcal C}^l$).

Hence, for every $\varepsilon>0$ and every candidate $G^{ij}(g)$ of $c$
we find a candidate $G^{ij}(\bar g)$ of $\bar c$ that is not positive
definite such that
\begin{align}
\|g-\bar g\|<\frac{\varepsilon}{K(B(\bar c))}\|g\|.
\end{align}

On the other hand, since we find a symmetrizer for $c$ in $\mathcal G$
there exists a point $g_+\in ker B(c)$ such that $G^{ij}(g_+)$
is positive definite.

We notice that $G^{ij}$ is a continuous map (because it is linear) and that the map which
assigns to a matrix its smallest eigenvalue is continuous, too. This
implies that there is a neighborhood $U_{+}\subset\mathcal G$ of $g_+$ such that
the matrix $G^{ij}(h)$ is positive definite for every $h\in U_{+}$.

We choose $\bar g_+\in ker(B(\bar c))$ such that
\begin{align}
\label{eq:gmbarg_estimate}
\|g_+-\bar g_+\|<\frac{\varepsilon}{K(B(\bar c))}\|g_+\|.
\end{align}
We know that $\bar g_+$ is not in $U_{+}$.
Therefore there exists a $\rho>0$ such that $\|g_+-\bar g_+\|\geq\rho$.

Equation \eqref{eq:gmbarg_estimate} then implies that for every $\varepsilon>0$ there
is a formulation $\bar c\in\bar{\mathcal C}^l$ with
\begin{align}
K(B(\bar c))<\Lambda\varepsilon,
\end{align}
with the constant $\Lambda=\|g_+\|/\rho$.

To show that this implies a jump in the rank of $B$ we consider a
sequence $\{\bar c_n\}_{n\in\mathbb N}\subset\bar {\mathcal C}^l$ such that
\begin{align}
\lim_{n\rightarrow\infty}\bar c_n=c.
\end{align}
We notice that $\lim_{n\rightarrow\infty}K(B(\bar c_n)) = 0$.

Then, for every $\bar c_n$ there exists a $\bar g_n\in (ker B(\bar c_n))^\bot$ such that
$\|\bar g_n\|=1$ and $K(B(\bar c_n))=\|B(\bar c_n)\bar g_n\|$ (the set
$\{\|\bar g\|=1\}$ is compact).
From the sequence $\{\bar g_n\}$ we choose a convergent subsequence $\{\bar h_m\}$
and get
\begin{align}
\nonumber
0
&=\lim_{m\rightarrow\infty}K(B(\bar c_m))\bar h_m
=\lim_{m\rightarrow\infty}B(\bar c_m)\bar h_m\\
&=\left(\lim_{m\rightarrow\infty}B(\bar c_m)\right)\left(\lim_{m\rightarrow\infty}\bar h_m\right)
=B(c)\bar h,
\end{align}
with $\|\bar h\|=1$. Analogously one can show that the limit of every convergent
sequence $\{g_n\in ker B(\bar c_n)\}$ is in the kernel of $B(c)$.

It follows that the dimension
of the kernel of $B(c)$ is bigger than the dimension of the kernel of
$B(\bar c_n)$, i.e. the rank of $B(c)$ is smaller than $N$.
\qed


\section{Derivation of conditions for strong hyperbolicity}
\label{app:strong_hyp_cond}


In this appendix we discuss how conditions for strong
hyperbolicity can be derived on the Hamiltonian formulations
of GR introduced in section \ref{sec:Hamiltonian_formulations}.
One finds that the special structure of the principal symbol
which we assume in this article is very helpful for this
purpose. We consider first 
order in time, second order in space formulations, but most 
steps can be directly carried forward to the fully second order 
case.

The key will be to write the symbol in a simple standard form
where the conditions can be read off easily. This is meant to 
be a preliminary step for the construction of positive definite 
symmetrizers in the next appendix \ref{app:symmetrizer_constr}.

We are interested in Hamiltonian formulations of GR with the Hamiltonian
$\mathcal H$ given in \eqref{eq:general_Hamiltonian}. Thus, we are free
to choose the seven formulation parameters $C_1,\ldots,C_7$. Yet,
we replace $C_1$, $C_2$ and $C_3$ by $\lambda_-^2$, $\lambda_V^2$ and
$\lambda_+^2$  respectively, using \eqref{eq:defn_lam_pm} and
\eqref{eq:tensor_vector_speeds}.
The given matrix expressions are valid in the Z4 variables
\eqref{eq:trafo_Z4_canon}. But the analysis in the canonical variables is analogous.
The only difference are factors of 2 at some places.

As discussed in section \ref{sec:Hamiltonian_formulations} for the hyperbolicity
analysis one can assume without loss of generality $\alpha=1$, $\beta^i=0$. I.e. the
principal part matrix depends on the 3-metric and the formulation parameters only.

After changing the order of variables to $(\gamma_{ij},\alpha,Z_i,K_{ij},\Theta,\beta^i)$
we find that the principal symbol, $P^s$, has the block structure
\begin{align}
\label{eq:XY_defn}
P^s &= \left(
\begin{array}{cc}
0 & X\\
Y & 0
\end{array}
\right).
\end{align}

To this matrix we apply similarity transformations to bring it to a form where
the conditions for strong hyperbolicity can be read off easily. First we consider
the case where $X$ is invertible. For our example that means $\lambda_-^2\neq 0$.
We get
\begin{align}
\tilde P^s &= T_X^{-1}P^s T_X =
\left(
\begin{array}{cc}
0 & \mathbf 1\\
X Y & 0
\end{array}
\right),
\end{align}
where
\begin{align}
T_X &= \left(
\begin{array}{cc}
\mathbf 1 & 0\\
0 & X^{-1}
\end{array}
\right).
\end{align}

Since we assumed that the principal part matrix can be written in terms of the
metric only one can write $XY$ in block diagonal form.
We define the orthogonal projector, $q$, to the direction $s$ and decompose
the vectors and symmetric 2-tensors as
\begin{align}
\nonumber
V^i &= q^i_A V^A + s^i V^s,\\
\nonumber
T^{ij} &= \left(q^i{}_{(A} q^j{}_{B)}-\frac12 q^{ij}q_{AB}\right) T_{TF}^{AB}
+ \sqrt{2} q_A{}^{(i}s^{j)} \hat T^{sA}\\
&\qquad+ \frac1{\sqrt{2}} q^{ij} \hat T^{qq}
+ s^i s^j T^{ss},
\end{align}
where we use $\hat T^{sA} := s_i q_j{}^A T^{ij}/\sqrt{2} = T^{sA}/\sqrt{2}$
and $\hat T^{qq} := \sqrt{2} T^{ij}q_{ij} = \sqrt{2} T^{qq}$ instead of the
usual decomposition with $T^{sA}$ and $T^{qq}$, because it makes the
transformation map orthogonal.
We apply this decomposition to $\tilde P^s$ and get that it
decomposes into a trace-free tensor, a vector and a scalar block,
$\tilde P^s_{TF}$, $\tilde P^s_{V}$ and $\tilde P^s_{S}$ respectively.
The corresponding submatrices of $XY$ are
\begin{align}
XY_{TF}\,{}_{ij}{}^{kl} &= q_i{}^{(l}q_j{}^{k)}-\frac12 q_{ij}q^{kl},\\
XY_{V}\,{}_i{}^k &= q_i{}^k \left(
\begin{array}{cc}
\lambda_V^2 & -2 \sqrt{2} (\lambda_V^2 - 1 - 2 C_6)\\
0 & \lambda_V^2
\end{array}
\right),\\
XY_{S} &=\left(
\begin{array}{cc}
A & 0\\
B & C
\end{array}
\right),
\end{align}
where
\begin{align}
\nonumber
A &= \left(
\begin{array}{cc}
\lambda_+^2 - (1 + C_5) C_7 & C_7/\sqrt{2}\\
\nonumber
\sqrt{2}\left(\lambda_+^2 - (1+C_5)(C_7 + \lambda_-^2)\right) & C_7 + \lambda_-^2
\end{array}
\right),\\
\nonumber
C &= \left(
\begin{array}{cc}
\lambda_-^2 - C_5 C_7 & C_7\lambda_+^2/2\\
\nonumber
-2C_5 & \lambda_+^2
\end{array}
\right),\\
B &= \left(
\begin{array}{cc}
B_{11} &
\frac{\lambda_-^2 - 1 - 8C_4 + C_7(2 \lambda_+^2-4\lambda_V^2+1)}{2\sqrt{2}}\\
B_{12} &
\frac{C_7 + \lambda_-^2 + 2\lambda_+^2 - 4\lambda_V^2+1}{\sqrt{2}}
\end{array}
\right),\\
\nonumber
2B_{11}&=1 + C_5 + 8C_4(1 + C_5) + 8C_6C_7 - 4\lambda_-^2\\
\nonumber&\qquad - 4(\lambda_V^2-1)(C_7 + \lambda_-^2) + 3(C_7 + \lambda_-^2)\lambda_+^2,\\
\nonumber
B_{12}&= 8 C_6 -4 (\lambda_V^2-1) - (1 + C_5)(C_7 + \lambda_-^2) + \lambda_+^2.
\end{align}
We do not get the vanishing upper right block in $XY_S$
when we start from the fully second order system. Therefore
we prefer the first order in time system here.

We see that the trace-free tensor block is always diagonalizable,
because $q_{(i}{}^{k}q_{j)}{}^{l}-q_{ij}q^{kl}/2$ is the identity.
The vector block has the structure
\begin{align}
q_i{}^{k}\left(
\begin{array}{cccc}
0 & 0 & 1 & 0\\
0 & 0 & 0 & 1\\
\lambda_V^2 & b_1 & 0 & 0\\
0 & \lambda_V^2 & 0 & 0
\end{array}
\right).
\end{align}
It is hence diagonalizable with real eigenvalues
if and only if $b_1=0$, i.e. $C_6=(\lambda_V^2-1)/2$,
and $\lambda_V^2 > 0$.

Concerning the scalar part we use that the upper right block of
$XY_S$ vanishes. A necessary condition for its diagonalizability
is hence that $A$ and $C$ are diagonalizable. One finds that the
eigenvalues of $A$ and $C$ are the same, namely
\eqref{eq:Speeds_v+-}.
Hence, $A$ and $C$ are diagonalizable if
$(\lambda_+^2-\lambda_-^2)^2+C_5^2C_7^2
+2C_5C_7(\lambda_+^2+\lambda_-^2) \neq 0$.

When $A$ and $C$ have only one eigenvalue those matrices must be
(as $2\times 2$-matrices) already diagonal. This implies $C_5=0=C_7$.
Moreover, since there is only one eigenvalue, the matrix $B$ must vanish.
One obtains \eqref{eq:scalar_one_param_equal_speeds}.
The squared characteristic speeds are
\begin{align}
\lambda_V^2 &= 1/4(1+3\lambda_-^2),&
v_{+/-}^2 = \lambda_{+/-}^2 &= 1+2C_1,
\end{align}
and $P^s$ has real eigenvalues if $\lambda_V^2\geq 0$,
$\lambda_{-}^2\geq 0$.
However, the case $\lambda_{-}^2=0$ is not allowed here,
because $X$ would be singular. Hence, we get $\lambda_-^2>0$.

For $v_+^2\neq v_-^2$ we can transform to a basis where
$A$ and $C$ are diagonal. Let $T_A$ and
$T_C$ diagonalize $A$ and $C$ respectively. We apply the transformation
\begin{align}
\bar P^s_S &= \tilde T_S^{-1} \tilde P^s_S \tilde T_S,
\end{align}
with
\begin{align}
\tilde T_S=\mbox{diag}\left(
T_A , T_C, T_A, T_C
\right)
\end{align}
and get the scalar block $\bar P^s_S$ with the following structure
\begin{align}
\overline{XY}_{S} &=
\left(
\begin{array}{cccc}
v_+^2 & 0 & 0 & 0\\
0 & v_-^2 & 0 & 0\\
\bar B_{11} & \bar B_{12} & v_+^2 & 0\\
\bar B_{21} & \bar B_{22} & 0 & v_-^2
\end{array}
\right).
\end{align}
Hence, to make $\bar P^s_S$ diagonalizable one needs that
$\bar B_{11} = 0 = \bar B_{22}$. Those expressions are
however very long so we do not present them. We discuss the solutions 
of the equations $\bar B_{11} = 0 = \bar B_{22}$ in section 
\ref{sec:Hamiltonian_formulations}. There we find three families of 
strongly hyperbolic formulations that we denote $\mathcal F_{1/2/3}$.

In the next section we prove symmetric hyperbolicity
of the strongly hyperbolic formulations in the four parameter family
$\mathcal F_1$ \eqref{eq:scalar_four_param} which also satisfy $C_7=0=C_5$.
In that case we get
\begin{align}
\nonumber
\bar B_{11} &= (\lambda_-^2 - 1 - 8C_4)/\sqrt{4},\\
\bar B_{22} &= (3\lambda_+^2 - 8\lambda_V^2 + 8C_6 + 5)/\sqrt{2}.
\end{align}
Hence strong hyperbolicity implies
\begin{align}
\nonumber
\lambda_V^2 &= (3\lambda_+^2+1)/4,&
C_4 &= (\lambda_-^2-1)/8,\\
\nonumber
C_6 &= 3(\lambda_+^2-1)/8,\\
\lambda_-^2 &> 0,&
\lambda_+^2 &\geq 0.
\end{align}
For $\lambda_+^2=0$ the matrix $XY_S$ has vanishing eigenvalues. Hence,
$\tilde P_S^s$ is not diagonalizable. Thus, the case
$\lambda_+^2=0$ must be excluded.

\paragraph*{Singular $X$:}
In the analysis above we needed the assumption that $X$ is invertible.
If $X$ is not invertible ($\lambda_-^2=0$) then the diagonalizability of
$XY$ is only a necessary condition for the diagonalizability of $P^s$
(when $P^s$ is diagonalizable then also $(P^s)^2$, and hence $XY$ and
$YX$ are).

Indeed it turns out that the derived conditions are not sufficient in
that case. However, having a complete set of eigenvectors for $(P^s)^2$
one obtains restrictions on the formulation parameters through the
requirement that every eigenvector of $(P^s)^2$ must be an eigenvector
of $P^s$ (details can be found e.g. in \cite{Richter:2009ff}). One finds
a one parameter family of strongly hyperbolic formulations. But there
the matrix $M$ \eqref{eq:Def_Minv} is not invertible, i.e. we cannot transform
to a fully second order free evolution system. Therefore those
formulations are not of interest here.


\section{Construction of symmetrizers}
\label{app:symmetrizer_constr}


In this appendix a possible procedure for the construction of 
a positive symmetrizer for formulations of GR is discussed. The 
difficulty is usually not the construction of general candidate 
symmetrizers but the proof of positivity. We use a method that 
simplifies the latter task. The approach works as follows.

First the principal symbol is transformed to a simple standard 
form using the calculations of the previous appendix 
\ref{app:strong_hyp_cond}. It is then possible to construct the 
set of all positive definite matrices, $\mathcal G$, that symmetrize 
this transformed symbol. The next step is to make an ansatz 
for candidate symmetrizers and to compare the sub block of 
those \emph{ansatz candidates} with the matrices in $\mathcal G$.
If there is an overlap then the positivity of the matrices in 
$\mathcal G$ may be used to simplify the positivity conditions 
of the ansatz candidate.

It seems to be the most important task to find that kind of 
simplifications, because the reason why the proof of positivity 
is hard are the complicated expressions in the eigenvalues of the 
candidate symmetrizers. Starting from a symmetrizer of the symbol 
allows to combine some terms.

We will show that in this way it is possible to prove symmetric 
hyperbolicity for the two parameter family of strongly hyperbolic 
formulations \eqref{eq:sym_hyp_form} (it is the special case of the
four parameter family $\mathcal F_1$ \eqref{eq:scalar_four_param}
with $C_5 = 0 = C_7$).

In section \ref{sec:Hamiltonian_formulations} we presented more
strongly hyperbolic formulations, but we were not able to prove 
the existence of symmetrizers in the general case. The reason is 
that already the positivity conditions on the symmetrizers of the 
symbol, i.e. the matrices in $\mathcal G$, become complicated and 
we were not able to reduce them.

Yet, for another two parameter family of strongly hyperbolic 
formulations, \eqref{eq:non_sym_hyp_form}, we were able to show that
it is not symmetric hyperbolic, which means that in GR there 
are Hamiltonian formulations that are strongly but not symmetric 
hyperbolic.

\paragraph*{Decomposition of the ansatz candidate.}

In the construction of a symmetrizer we assume that we start from 
a reasonable ansatz candidate, $G$. The first step is a decomposition 
of its derivative indices into longitudinal and transverse part:
\begin{align}
\label{eq:deriv_decomp}
G^{ij} =
\left(
\begin{array}{cc}
s^i & q^i{}_A
\end{array}
\right)
\left(
\begin{array}{cc}
G^{ss} & G^{sB}\\
G^{As} & G^{AB}\\
\end{array}
\right)
\left(
\begin{array}{c}
s^j\\
q_B{}^j
\end{array}
\right),
\end{align}
with $G^{ss\,\dag}=G^{ss}$, $G^{As\,\dag}=G^{sA}$, $G^{AB\,\dag}=G^{BA}$.
We can apply the same decomposition to the principal part matrix:
\begin{align}
s_p A^{p}{}_i{}^j =
\left(
\begin{array}{cc}
s_i & q_i{}^A
\end{array}
\right)
\left(
\begin{array}{cc}
A^s{}_s{}^s & A^s{}_s{}^B\\
A^s{}_A{}^s & A^s{}_A{}^B
\end{array}
\right)
\left(
\begin{array}{c}
s^j\\
q_B{}^j
\end{array}
\right).
\end{align}
Due to the structure of $A^p{}_i{}^j$ we get
$s_p q^i_A A^p{}_i{}^j=0$, and it follows that $G^{ij}$
is a candidate symmetrizer of $A^p{}_j{}^k$ if and only if for
all spatial vectors $s$ the matrix $G^{ss}A^s{}_s{}^s$
is hermitian.

Later we discuss the construction of a positive definite $G^{ss}$.
But given this matrix it is clear that $G^{ij}$ is positive
definite if and only if
\begin{align}
\left(
\begin{array}{cc}
\mathbf 1 & 0\\
B^A & q^A{}_C
\end{array}
\right)&
\left(
\begin{array}{cc}
G^{ss} & G^{sC}\\
G^{Ds} & G^{CD}
\end{array}
\right)
\left(
\begin{array}{cc}
\mathbf 1 & B^{A\,\dag}\\
0 & q_D{}^{B}
\end{array}
\right)\\
\nonumber
&=
\left(
\begin{array}{cc}
G^{ss} & 0\\
0 & G^{AB}-G^{As}(G^{ss})^{-1} G^{sB}
\end{array}
\right)>0,
\end{align}
with
\begin{align}
B^A &=-G^{As}(G^{ss})^{-1}
\end{align}
is. Hence, given a positive definite $G^{ss}$ we need to prove
positivity of
\begin{align}
\label{eq:remaining_pos_mat}
G^{AB}-G^{As}(G^{ss})^{-1} G^{sB}.
\end{align}

In our formulations of GR that means for the analysis of
positivity the full candidate symmetrizer, a $40\times 40$ matrix,
is reduced to two $20\times 20$ matrices.

\paragraph*{Construction of a positive $G^{ss}$.}

Now we want to construct a matrix $G^{ss}$ that
symmetrizes the principal symbol $P^s = A^s{}_s{}^s$.
One cannot expect that every symmetrizer of $P^s$ is
part of a symmetrizer for the full principal part matrix, i.e. the
constructed $G^{ss}$ must be sufficiently general
to make it possible to adjust parameters in the
construction of the full symmetrizer later on.
But it must not contain too many parameters, because
this complicates the positivity analysis of the matrix
\eqref{eq:remaining_pos_mat}.

As a good compromise we found it reasonable to start from
the ansatz that the full symmetrizer depends on the metric only,
but is not restricted otherwise:
\begin{align}
\label{eq:ansatz_symmetrizer}
&G^{ij\,kl\,mn}=\\
\nonumber
&\left(
\begin{array}{cccccc}
G_{11}^{ij\,kl\,mn} & G_{12}^{ij\,kl} & 0 & 0 & 0 & G_{16}^{i\,kl\,m} \\
G_{12}^{ji\,mn} & G_{22}^{ij} & 0 & 0 & 0 & G_{26}^{i\,m} \\
0 & 0 & G_{33}^{ij}{}_{k\,m} & G_{34}^{i}{}_{k\,mn} & G_{35}^{i}{}_{k} & 0 \\
0 & 0 & G_{34}^{j}{}_{m\,kl} & G_{44\,kl\,mn} & G_{45\,kl} & 0 \\
0 & 0 & G_{35}^{j}{}_{m} & G_{45\,mn} & G_{55} & 0 \\
G_{16}^{j\,mn\,k} & G_{26}^{j\,k} & 0 & 0 & 0 & G_{66}^{k\,m}
\end{array}
\right),
\end{align}
with
{\allowdisplaybreaks
\begin{align*}
G_{11}^{ij\,kl\,mn} &=
G_{11}^1 \gamma^{ij} \gamma^{kl} \gamma^{mn} +
2 G_{11}^2 \gamma^{ij} \gamma^{k(m} \gamma^{n)l}\\
&\qquad +
2 G_{11}^3 (\gamma^{i(k} \gamma^{l)j} \gamma^{mn} +
            \gamma^{i(m} \gamma^{n)j} \gamma^{kl})\\
&\qquad +
2 G_{11}^4 (\gamma^{ik} \gamma^{j(m} \gamma^{n)l} +
            \gamma^{il} \gamma^{j(m} \gamma^{n)k}\\
&\qquad\qquad + \gamma^{im} \gamma^{j(k} \gamma^{l)n} +
            \gamma^{in} \gamma^{j(k} \gamma^{l)m})\\
&\qquad +
A_{11}^1 (\gamma^{k[i}\gamma^{j](m}\gamma^{n)l} +
          \gamma^{l[i}\gamma^{j](m}\gamma^{n)m}),\\
G_{12}^{ij\,kl} &=
2 G_{12}^1 \gamma^{i(k} \gamma^{l)j} +
  G_{12}^2 \gamma^{ij} \gamma^{kl},\\
G_{22}^{ij} &= G_{22}^1 \gamma^{ij},\\
G_{33}^{ij}{}_{k\,m} &=
2 G_{33}^1 \delta^{i}_{(k} \delta^{j}_{m)} +
G_{33}^2 \gamma^{ij} \gamma_{km}
+ A_{33}^1 \delta^{[i}_k\delta^{j]}_m,\\
G_{16}^{i\,kl\,m} &=
2 G_{16}^1 \gamma^{i(k} \gamma^{l)m} +
G_{16}^2 \gamma^{im} \gamma^{kl},\\
G_{26}^{i\,m} &= G_{26}^1 \gamma^{im},\\
G_{34}^{i}{}_{k\,mn} &=
G_{34}^1 \delta^i_{(m} \gamma_{n)k} +
G_{34}^2 \delta^i_k \gamma_{mn},\\
G_{35}^{i}{}_{k} &= G_{35}^1 \delta^{i}_k,\\
G_{44\,kl\,mn} &=
2 G_{44}^1 \gamma_{k(m} \gamma_{n)l} +
G_{44}^2 \gamma_{kl} \gamma_{mn},\\
G_{45\,kl} &= G_{45}^1 \gamma^{kl},\\
G_{66}^{k\,m} &= G_{66}^1 \gamma^{km}.
\end{align*}
}

Now, from the previous appendix \ref{app:strong_hyp_cond}
we know that for $\lambda_-^2\neq 0$
there exists a matrix $T$ and a diagonal matrix $\Lambda$ such that
\begin{align}
\tilde P^s &= T^{-1}P^s T =
\left(
\begin{array}{cc}
0 & \mathbf 1 \\
\Lambda & 0
\end{array}
\right).
\end{align}

We decompose the derivative indices of
the ansatz candidate according to \eqref{eq:deriv_decomp}
and apply a congruence transformation using the matrix $T$:
\begin{align}
\label{eq:trafo_cand_diag_basis}
&\left(
\begin{array}{cc}
\tilde G^{ss} & \tilde G^{sB}\\
\tilde G^{As} & G^{AB}
\end{array}
\right)=\\
\nonumber
&\qquad=
\left(
\begin{array}{cc}
T^\dag & 0\\
0 & q^A{}_C
\end{array}
\right)
\left(
\begin{array}{cc}
G^{ss} & G^{sD}\\
G^{Cs} & G^{CD}
\end{array}
\right)
\left(
\begin{array}{cc}
T & 0\\
0 & q_D{}^B
\end{array}
\right).
\end{align}
The condition that $G^{ij}$ is a candidate symmetrizer of the full problem
then becomes
\begin{align}
\label{eq:candidate_condition}
\tilde G^{ss}\tilde P^s = (\tilde G^{ss}\tilde P^s)^\dag.
\end{align}
This is a linear equation on the parameters $G_{ab}^c$ in $G$. It is guaranteed
that the solution does not depend on the direction of $s$, because both the
principal part matrix $A^p{}_i{}^j$ and the ansatz candidate $G^{ij}$ depend on
the metric only.

For the family \eqref{eq:sym_hyp_form}
the solutions of \eqref{eq:candidate_condition} are five dimensional. For the
positivity analysis it is important to note that that $\tilde G^{ss}$
decomposes into $2\times 2$ blocks.

To simplify expressions we introduce new parameters:
\begin{align}
s_{V} &:= 2 G_{11}^2 + \frac{G_{66}^1}{2(3\lambda_+^2+1)},\\
\nonumber
s_{44} &:= G_{33}^1 + \frac{G_{66}^1}{2(3\lambda_+^2+1)},\\
\nonumber
s_{22} &:= \frac{4G_{11}^2 - 2G_{33}^1 + G_{44}^2 - G_{66}^1(1 + 3\lambda_+^2)^{-1}}{\lambda_-^4},\\
\nonumber
s_{12} &:= \frac{4G_{11}^2 - 2G_{33}^1 + G_{44}^2 + 2G_{45}^1 - G_{66}^1(1 + 3\lambda_+^2)^{-1}}{2\lambda_-^2},
\end{align}
where $s_{V}$ appears in the vector and $s_{12}$, $s_{22}$ as well as
$s_{44}$ in the scalar block.

With those parameters and the observation that $\tilde G^{ss}$ decomposes into
$2\times 2$ blocks the positivity conditions on $\tilde G^{ss}$
can be reduced using computer algebra. One finds that for solutions of \eqref{eq:candidate_condition}
$\tilde G^{ss}$ is positive definite if and only if
\begin{align}
\nonumber
s_{44} &<
-\frac2{9\lambda_+^2s_{22}}\big(\lambda_-^2s_{12}^2\\
\nonumber&\qquad\qquad
+ s_{22}\left(G_{11}^2(4 + 9\lambda_+^2) - 2s_V(1 + 3\lambda_+^2)\right)\big)\\
s_{22} &> -\frac{\lambda_-^2 s_{12}^2}{2(1 + 3\lambda_+^2)(2G_{11}^2 - s_V)},
\end{align}
and
\begin{align}
s_{V} &> 2 G_{11}^2,&
\nonumber
s_{44} &> \frac{2 G_{11}^2}{3},&
\nonumber
G_{11}^2 &> 0,\\
\nonumber
s_{12}&\in\mathbb R,&
\nonumber
\lambda_-^2&>0,&
\lambda_+^2&>0.
\end{align}
To simplify that condition further we define new parameters $p_1$, $p_2$, $p_3$ as follows
\begin{align}
\nonumber
p_1 &= \frac1{\lambda_-^2}\left(2s_{22}(s_V-2G_{11}^2)(1 + 3\lambda_+^2)-\lambda_-^2s_{12}^2\right),\\
p_2 &= N_2/D_2,\quad
p_3 = \frac{3s_{44}}{2 G_{11}^2}-1,\\
\nonumber
-N_2 &=
s_{22}(2G_{11}^2 - s_V)\times\\
\nonumber&\qquad
\times\left(2G_{11}^2(4 + 9\lambda_+^2) + 9\lambda_+^2s_{44} - 4 s_V(1 + 3\lambda_+^2)\right)\\
\nonumber&\quad
+(2G_{11}^2 - s_V)2\lambda_-^2 s_{12}^2\\
\nonumber
D_2 &=
4(G_{11}^2)^2(4 + 9\lambda_+^2)s_{22}
- 9\lambda_+^2s_{22}s_{44}s_V\\
\nonumber&\quad + 2G_{11}^2\left(2\lambda_-^2 s_{12}^2 + 9\lambda_+^2 s_{22} s_{44} - (4 + 9\lambda_+^2)s_{22}s_V\right)
\end{align}
The positivity condition for $\tilde G^{ss}$ then becomes just
\begin{align}
\label{eq:simple_pos_Gss_cond}
\nonumber
p_1&>0,&
p_2&>0,&
p_3&>0,&
G_{11}^2&>0,\\
s_{12}&\in\mathbb R,&
\lambda_-^2&>0,&
\lambda_+^2&>0.
\end{align}

\paragraph*{Positivity of the transverse block.}
Now we consider the remaining transverse block \eqref{eq:remaining_pos_mat}.
It is easy to check that the transformation \eqref{eq:trafo_cand_diag_basis}
does not change this matrix.

We use the solution of \eqref{eq:candidate_condition} to write
\eqref{eq:remaining_pos_mat} in terms of
$p_1$, $p_2$, $p_3$, $G_{11}^2$, $s_{12}$, $\lambda_-^2$, $\lambda_+^2$, $A_{11}^1$ and $A_{33}^1$
and show that the parameters can be chosen to make it positive definite
for every $\lambda_-^2>0$, $\lambda_+^2>0$.

First we notice that \eqref{eq:remaining_pos_mat} has the following structure
\begin{align}
 G^{AB\,kl\,mn} &=
\left(
\begin{array}{ccc}
G_{11}^{AB\,kl\,mn} & G_{12}^{AB\,kl} & 0 \\
G_{12}^{AB\,mn} & G_{22}^{AB} & 0 \\
0 & 0 & G_{33}^{AB\,k\,m}
\end{array}
\right),
\end{align}
where the component tensors can be written in terms of $q$ and $s$:
\begin{subequations}
\label{eq:tensor_struct_rest}
\begin{align}
\label{eq:tensor_struct_rest_a}
\nonumber
G_{11}^{AB\,kl\,mn} &=
  g_{11}^1 q^{AB}q^{kl}q^{mn}
+ 2g_{11}^2 q^{AB}q^{k(m}q^{n)l}\\
\nonumber&\qquad
+ 2g_{11}^3 (q^{A(m}q^{n)B}q^{kl}+q^{A(k}q^{l)B}q^{mn})\\
\nonumber&\qquad
+ 2g_{11}^4 (q^{Ak}q^{B(m}q^{n)l}+q^{Al}q^{B(m}q^{n)k})\\
\nonumber&\qquad
+ 2g_{11}^5 (q^{Am}q^{B(k}q^{l)n}+q^{An}q^{B(k}q^{l)m})\\
\nonumber&\qquad
+ g_{11}^6 q^{AB}(s^ks^l q^{mn}+q^{kl}s^ms^n)\\
\nonumber&\qquad
+ 2g_{11}^7 q^{AB}(s^ks^{(m} q^{n)l}+s^ls^{(m} q^{n)k})\\
\nonumber&\qquad
+ 2g_{11}^8 (q^{A(m}q^{n)B}s^{k}s^{l}+q^{A(k}q^{l)B}s^{m}s^{n})\\
\nonumber&\qquad
+ 2g_{11}^9 (q^{Ak}q^{B(m}s^{n)}s^{l}+q^{Al}q^{B(m}s^{n)}s^{k})\\
\nonumber&\qquad
+ 2g_{11}^{10} (q^{Am}q^{B(k}s^{l)}s^{n}+q^{An}q^{B(k}s^{l)}s^{m})\\
&\qquad
+ g_{11}^{11} q^{AB} s^ks^ls^ms^n,\\
\label{eq:tensor_struct_rest_b}
G_{12}^{AB\,kl} &=
  g_{12}^1 q^{AB} q^{km}
+ 2g_{12}^2 q^{A(k} q^{m)B}
+ g_{12}^3 q^{AB} s^k s^m,\\
\label{eq:tensor_struct_rest_c}
G_{22}^{AB} &= g_{22}^1 q^{AB},\\
\label{eq:tensor_struct_rest_d}
\nonumber
G_{33}^{AB\,k\,m} &=
  g_{33}^1 q^{AB} q^{km}
+ g_{33}^2 q^{Ak} q^{Bm}\\
&\qquad
+ g_{33}^3 q^{Am} q^{kB}
+ g_{33}^4 q^{AB} s^k s^m.
\end{align}
\end{subequations}

When we assume that $G_{22}$ is positive definite then
we can transform this matrix without altering positivity to
\begin{align}
\mbox{diag}(
\bar G_{11}^{AB\,kl\,mn},\,
&G_{22}^{AB},\,G_{33}^{AB\,k\,m}),
\end{align}
where
\begin{align}
\bar G_{11}^{AB\,kl\,mn}:=
G_{11}^{AB\,kl\,mn}-G_{12}^{AC\,kl}(G_{22}^{-1})_{CD}G_{12}^{DB\,mn}.
\end{align}
The structure of $\bar G_{11}^{AB\,kl\,mn}$ is of course
\eqref{eq:tensor_struct_rest_a}, too. Yet, we denote the scalar parameters
$\bar g_{11}^c$ instead of $g_{11}^c$.

\paragraph*{Positivity of $G_{22}^{AB}$.}
Concerning the matrix $G_{22}^{AB}$ we find $g_{22}^1 = \lambda_-^2p_1^2/D$,
with the denominator
\begin{align}
D&=3G_{11}^2\lambda_+^2(p_1+s_{12}^2)p_3(1+p_2)+4p_1p_2(3\lambda_+^2+1).
\end{align}
We see that with the condition \eqref{eq:simple_pos_Gss_cond}
every term in $N$ and $D$ is positive. Hence, $G_{22}^{AB}$ is positive.

\paragraph*{Positivity of $G_{33}^{AB\,k\,m}$.}
If we write the matrix $G_{33}^{AB\,k\,m}$ in an orthonormal basis that
contains $s$ then we find the following eigenvalues
\begin{align}
\{g_{33}^2,\,
g_{33}^1-g_{33}^4,\,
g_{33}^1+g_{33}^4,\,
g_{33}^1+2g_{33}^3+g_{33}^4\}.
\end{align}
Written in terms of
$p_1$, $p_2$, $p_3$, $G_{11}^2$, $s_{12}$, $\lambda_-^2$, $\lambda_+^2$, $A_{11}^1$ and $A_{33}^1$
those eigenvalues have the form
\begin{subequations}
\begin{align}
g_{33}^2 &= z_1(A_{33}^1-z_2)(A_{33}^1-z_3),\\
g_{33}^1-g_{33}^4 &= A_{33}^1-z_2,\\
g_{33}^1+g_{33}^4 &= A_{33}^1+z_3,\\
g_{33}^1+2g_{33}^3+g_{33}^4 &= z_4(A_{33}^1-z_5)(A_{33}^1-z_3).
\end{align}
\end{subequations}
The expressions for $z_1,\ldots,z_5$ are quite complicated functions of the
parameters. But using \eqref{eq:simple_pos_Gss_cond} and computer algebra one can show that
that $z_1<0$, $z_4<0$, $z_2<z_3$ and $z_5<z_3$.
Hence, we can choose $A_{33}^1$ such that $G_{33}^{AB\,k\,m}$ is positive
definite, e.g. $A_{33}^1 = (\max(z_2,z_5)+z_3)/2$.

Since $A_{33}^1$ does not appear in $G_{22}^{AB}$ or $\bar G_{11}^{AB\,kl\,mn}$
we impose conditions that do not restrict the parameter choices in the rest of the
matrix.

\paragraph*{Positivity of $\bar G_{11}^{AB\,kl\,mn}$.}

The hardest part in the positivity analysis is the matrix $\bar G_{11}^{AB\,kl\,mn}$.
Again the first step is to write it in an orthonormal basis that contains $s$. One finds
that the resulting matrix decomposes into two $4\times 4$ and two $2\times 2$ blocks,
where the $4\times 4$ blocks are similar. The $2\times 2$ blocks are
\begin{align}
\label{eq:barG_11_2x2}
\nonumber
&2\left(
\begin{array}{cc}
\bar g_{11}^7 & \bar g_{11}^{10}\\
\bar g_{11}^{10} & \bar g_{11}^7
\end{array}
\right),\\
&2\left(
\begin{array}{cc}
\bar g_{11}^7+\bar g_{11}^{10}+\bar g_{11}^9 & \bar g_{11}^9\\
\bar g_{11}^9 & \bar g_{11}^7+\bar g_{11}^{10}+\bar g_{11}^9
\end{array}
\right).
\end{align}

Looking at the entries in the diagonal of the $4\times 4$ blocks
one finds complicated expressions only at two
positions. Those entries are
\begin{align*}
2 (\bar g_{11}^1 + \bar g_{11}^2 + 2 \bar g_{11}^3 + \bar g_{11}^4 + \bar g_{11}^5)
\quad\mbox{and}\quad
\bar g_{11}^{11}.
\end{align*}

Using a congruence transformation we decompose the $4\times 4$ blocks into $2\times 2$ matrices
such that the complicated expressions are in only one $2\times 2$ matrix. The resulting matrices
are
\begin{subequations}
\label{eq:G11_2x2_decomp}
\begin{align}
\label{eq:G11_2x2_decomp_a}
2&\left(
\begin{array}{cc}
\bar g_{11}^2+\bar g_{11}^4+\bar g_{11}^5 & \bar g_{11}^5-\bar g_{11}^4\\
\bar g_{11}^5-\bar g_{11}^4 & \bar g_{11}^2 \bar g_{11}^4+\bar g_{11}^5
\end{array}
\right),\\
\label{eq:G11_2x2_decomp_b}
&\frac1{\bar g_{11}^2+2\bar g_{11}^4}\left(
\begin{array}{cc}
2 m_{11} & m_{12}\\
m_{12} & m_{22}
\end{array}
\right),
\end{align}
\end{subequations}
where
\begin{align}
\nonumber
m_{11} &=
(\bar g_{11}^2)^2
+ \bar g_{11}^1 (\bar g_{11}^2 + 2 \bar g_{11}^4)
+ \bar g_{11}^2 (2 \bar g_{11}^3 + 3 \bar g_{11}^4 + \bar g_{11}^5)\\
\nonumber&\qquad
- 2 ((\bar g_{11}^3)^2 + 2 \bar g_{11}^3 \bar g_{11}^5 + \bar g_{11}^5 (\bar g_{11}^4 + \bar g_{11}^5)),\\
\nonumber
m_{12} &= \sqrt{2} (2 \bar g_{11}^4 \bar g_{11}^6 - 
   2 (\bar g_{11}^3 + \bar g_{11}^5) \bar g_{11}^8 + \bar g_{11}^2 (\bar g_{11}^6 + \bar g_{11}^8)),\\
m_{22} &= \bar g_{11}^{11} \bar g_{11}^2 + 
 2 \bar g_{11}^{11} \bar g_{11}^4 - 2 (\bar g_{11}^8)^2.
\end{align}

To simplify the form of the eigenvalues of the $2\times 2$ blocks \eqref{eq:barG_11_2x2},
\eqref{eq:G11_2x2_decomp_a} we define another parameter, $p_4$, in terms of $A_{11}^1$:
\begin{align}
\nonumber
p_4 &= \frac{ G_{11}^2 (1 + p_2) }{32 p_1}\left(3\lambda_+^2p_3(p_1+s_{12}^2)+4p_1(1+3\lambda_+^2)\right)\\
&\qquad - A_{11}^1.
\end{align}
We get the following eigenvalues
\begin{align}
\nonumber
\bigg\{&\frac{4\left(3(G_{11}^2)^2\lambda_+^2 + 3G_{11}^2(\lambda_+^2-1)p_4 - 4 p_4^2\right)}{(1+3\lambda_+^2) G_{11}^2},\\
&3 G_{11}^2-2 p_4,
4 p_4,6 p_4-\frac{4 p_4^2}{G_{11}^2},10 p_4-\frac{12 p_4^2}{\lambda_+^2 G_{11}^2}\bigg\}.
\end{align}
Hence, together with \eqref{eq:simple_pos_Gss_cond},
the three matrices \eqref{eq:barG_11_2x2}, \eqref{eq:G11_2x2_decomp_a}
are positive definite if and only if the inequalities
\begin{align}
\label{eq:pos_cond_simple_11_blocks}
\nonumber
p_4 &> 0, &
p_4 &< \frac32 G_{11}^2, &
p_4 &< \frac{5}{6} G_{11}^2\lambda_+^2,\\
\nonumber
p_1 &> 0, &
p_2 &> 0,&
p_3 &> 0,\\
s_{12} &\in \mathbb R,&
\lambda_-^2 &> 0
\end{align}
are satisfied.

Now, to prove positivity of the remaining block \eqref{eq:G11_2x2_decomp_b}
we use the fact that the only condition on $s_{12}$
is $s_{12}\in\mathbb R$. We choose $s_{12}$ such that \eqref{eq:G11_2x2_decomp_b}
becomes diagonal, i.e. we solve the equation $m_{12}=0$ for $s_{12}$.
We get $s_{12}^2=-N_{12}/D_{12}-p_1$ with the numerator and denominator
{
\allowdisplaybreaks
\begin{align}
\label{eq:s_{12}_solution}
\nonumber
N_{12} &= 4(1 + 3\lambda_+^2)p_1p_2\times\\
\nonumber
&\quad\times
\big[3 (G_{11}^2)^3\lambda_+^4 p_3 + 3(G_{11}^2)^2\lambda_+^2((\lambda_+^2-1)p_3-10)p_4\\
\nonumber
&\qquad + 
  4G_{11}^2(9 - \lambda_+^2(p_3-5))p_4^2 - 24 p_4^3\big],\\
\nonumber
D_{12} &= 3\lambda_+^2 p_3
\big[3 (G_{11}^2)^3 \lambda_+^4 p_2 p_3\\
\nonumber
&\quad\qquad
+ 3 (G_{11}^2)^2\lambda_+^2((\lambda_+^2-1) p_2 p_3 -10(1 + p_2)) p_4\\
\nonumber
&\quad\qquad
+ 4 G_{11}^2 ((9 + 5\lambda_+^2)(1 + p_2) - \lambda_+^2 p_2 p_3) p_4^2\\
&\quad\qquad
- 24(1 + p_2)p_4^3\big].
\end{align}
}
The condition $s_{12}\in\mathbb R$ can be written as $s_{12}^2\geq 0$. It
restricts the allowed range of $p_2$ and $p_3$:
{\allowdisplaybreaks
\begin{align}
\nonumber
10p_3 &>\left(\frac{G_{11}^2\lambda_+^2}{5 G_{11}^2\lambda_+^2 - 6p_4} + \frac{5G_{11}^2\lambda_+^2}{3G_{11}^2 - 2p_4} + \frac{G_{11}^2\lambda_+^2}{p_4}\right)^{-1},\\
p_2 &\geq \frac{6 \lambda_+^2 p_3 (5 G_{11}^2\lambda_+^2 - 6p_4)(3 G_{11}^2 - 2p_4)p_4}{(4 + 3\lambda_+^2(4 + p_3)) D_2},\\
\nonumber
D_2&=3(G_{11}^2)^3\lambda_+^4p_3 + 3(G_{11}^2)^2\lambda_+^2\left((\lambda_+^2-1)p_3-10\right)p_4\\
\nonumber
&\qquad + 4G_{11}^2\left(9 - \lambda_+^2(p_3-5)\right)p_4^2 - 24p_4^3,
\end{align}
}
but if these parameters are chosen sufficiently big then the inequalities are satisfied.

If we replace $s_{12}$ everywhere using \eqref{eq:s_{12}_solution} then
\eqref{eq:G11_2x2_decomp_b} has only one simple eigenvalue:
\begin{align}
\frac{2(5G_{11}^2\lambda_+^2 - 6p_4)(3G_{11}^2 - 2p_4)p_4}{3(G_{11}^2)^2\lambda_+^2 + 3G_{11}^2(\lambda_+^2-1)p_4 - 4p_4^2}.
\end{align}
One can check that numerator and denominator are positive if \eqref{eq:pos_cond_simple_11_blocks}
is satisfied.

Hence, indeed for every $\lambda_-^2>0$, $\lambda_+^2>0$
the parameters can be chosen such that
$\bar G_{11}^{AB\,kl\,mn}$ is positive definite.

The calculations presented here were performed using the computer algebra
system {\tt mathematica} \cite{Mathematica} with the package {\tt xTensor} by
Jos\'e-Mar\'ia Mart\'in-Garc\'ia \cite{xAct}.


\bibliography{bibliography}{}
\bibliographystyle{unsrt}


\end{document}